\documentclass[final,3p,times]{elsarticle}

\usepackage{amsmath,amssymb,amsthm} 
\usepackage{graphicx,multirow} 
\usepackage{adjustbox}
\usepackage{subcaption}
\usepackage{algorithm}
\usepackage{algpseudocode}
\usepackage{makecell}
\usepackage{natbib}
\usepackage{pdflscape}
\newtheorem{proposition}{Proposition}
\usepackage{xcolor}
\usepackage{rotating}
\usepackage{multirow}
\usepackage{tabularx}
\usepackage{booktabs} 
\usepackage{longtable}
\usepackage{float} 
\newtheorem {Def}{Definition}
\usepackage{tikz}
\usepackage{hyperref}
\usepackage{titlesec}
\usepackage{caption}
\titlespacing*{\subsection}
  {0pt}   
  {7pt}   
  {4pt}   

\journal{}

\begin{document}

\begin{frontmatter}



\title{Spectral Efficiency Centrality: An Efficient Spectral Approach for Influential Node Identification in Temporal Networks}


\author[mymainaddress]{Aksa Urooj \corref{mycorrespondingauthor}}\ead{aksaurooj62@gmail.com}

\author[mymainaddress]{Iqra Altaf Gillani}
\ead{iqraaltaf@nitsri.ac.in}

\cortext[mycorrespondingauthor]{Corresponding author}

\address[mymainaddress]{Department of Information Technology, National Institute of Technology, Srinagar, Jammu and Kashmir, India}
\begin{abstract}
Centrality measures play a vital role in identifying influential nodes in evolving networks. While existing temporal centrality measures primarily rely on local structural properties or temporal paths, spectral node-removal approaches have been largely limited to static networks. To bridge this gap, we propose Spectral Efficiency Centrality (SEC), a temporal spectral centrality framework that quantifies node importance by evaluating the change in spectral radius caused by node removal across temporal snapshots. By capturing the global structural influence of nodes throughout network evolution, SEC identifies nodes that are critical for preserving the structural connectivity and efficiency of temporal networks.\par
To improve computational scalability, we further develop an efficient approximation, ASEC, based on Perron–\\
Frobenius theory and first-order eigenvalue perturbation. ASEC requires only the leading eigenpair and avoids repeated eigendecomposition, making it suitable for large temporal networks. Extensive experiments on multiple real-world temporal datasets demonstrate that SEC and ASEC outperform existing baseline centrality measures in identifying influential nodes under SI, SIS, and IC diffusion models. Statistical significance and robustness analyses further confirm their effectiveness, while ASEC offers a computationally efficient solution for large-scale temporal networks.
\end{abstract}




\begin{keyword}
Centrality Measures, Temporal Networks, Spectral Efficiency Centrality, Network Robustness, Influence Maximization, Information Diffusion 
\end{keyword}

\end{frontmatter}

\section{Introduction}
\label{sec:Intro}
Complex systems can be effectively represented through network-based frameworks \cite{fortunato2010community} composed of interacting entities \cite{newman2003structure}\cite{albert2002statistical}\cite{boccaletti2006complex}. In such representations, nodes correspond to system components and edges capture interactions or relationships between them. Network-based modeling has been widely adopted across diverse domains, including social systems, communication networks, biological processes, transportation infrastructures, and information diffusion. By abstracting interactions into graph structures, network science enables the analysis of connectivity patterns, structural organization, and collective behavior at a global scale.\par
A fundamental task in network analysis is to determine the most significant nodes, commonly addressed through centrality \cite{brede2012networks} measures. These measures provide insights into the relative significance of individual nodes and determine their role in various applications \cite{lu2016vital}. Such applications extend across many domains, including assessing the impact of scientific journals [7], ranking sports teams or athletes \cite{callaghan2007random}\cite{chartier2011sensitivity}\cite{saavedra2010mutually}, identifying influential individuals\cite{kempe2003maximizing}\cite{ZAHOOR2026115648}, detecting critical infrastructures vulnerable to congestion \cite{guimera2005worldwide} \cite{holme2003congestion}, identifying key nodes in communities to eradicate them, like in the case of lung diseases \cite{sun2007comparative} \cite{girvan2002community} \cite{hu2010measuring} \cite{lancichinetti2010characterizing} and determining key genetic and protein targets [65], among others.\par 
Traditional centrality measures such as closeness \cite{sabidussi1966centrality}, degree  \cite{bonacich1972factoring}, betweenness \cite{freeman1978centrality}, eigenvector \cite{bonacich1987power}\cite{wasserman1994social} and, katz centrality \cite{katz1953new} have long served as benchmarks for quantifying node importance in directed, undirected, weighted\cite{opsahl2010node}, and static networks, where interactions are assumed to remain fixed over time.  However, many real-world systems, such as communication (contacts occur intermittently), social systems (relationships evolve as individuals interact at different moments), and transportation networks (routes and flows change over time), evolve with time, making static centralities insufficient for capturing their dynamic structure.\par
To overcome this, temporal networks have increasingly become a better representation of time-varying systems. In a dynamic network, interactions are organized into discrete snapshots or time-stamped events, preserving the order and duration of connections.
Several classical centrality measures originally defined for static networks have been extended to temporal settings in order to capture time-dependent structural variations\cite{grindrod2012models}. These measures 
attempt to incorporate temporal information either by aggregating snapshots with weighting schemes or by constructing multilayer representations of the network. While these extensions improve upon purely static formulations, they often rely on snapshot aggregation, or inter-layer coupling assumptions that may hide the fine-grained temporal ordering of interactions. In particular, they may not completely keep the correct order of events or accurately capture how connections happen over time in changing networks.\par

These limitations have led to the emergence of centrality measures specifically designed for temporal networks, including temporal walk centrality \cite{oettershagen2022temporal}, dynamic centrality \cite{lerman2010centrality}, coverage centrality \cite{takaguchi2016coverage}, and TempoRank\cite{rocha2014random}. Such methods explicitly account for time-respecting paths, temporal reachability, and diffusion dynamics, thereby providing a more exact representation of influence in time-evolving systems. These metrics have been widely employed in applications ranging from real online social networks \cite{daly2008social}\cite{brodka2011degree} to epidemic modeling\cite{kim2012temporal} \cite{anderson1991infectious} \cite{lloyd2009epidemic}. However, in many applications, the significance of a node is not solely determined by how frequently it interacts with others, but also by how its presence plays a significant role in sustaining the network’s global connectivity over time. Existing approaches often focus on local or path-based properties, while often overlooking the structural robustness of the network in the presence of node failures.  This temporal dimension introduces additional complexity, as structural changes at different snapshots can have non-uniform and cumulative effects on overall network behavior. The removal of a critical node can drastically reduce the efficiency of information flow by weakening the network’s largest eigenvalue (spectral radius), which directly reflects its ability to sustain diffusion processes. Thus, there is a growing need for a metric that explicitly captures a node’s contribution to the spectral robustness of temporal networks. \par
Several studies have previously explored spectral properties for assessing node importance\cite{milanese2010approximating}\cite{restrepo2006characterizing}. Wang et al. \cite{wang2011identifying} utilized the spectrum of the adjacency matrix to identify nodes critical to the community structure of static graphs, where influence was linked to variations in eigenvalues of the adjacency matrix. Similarly, Greetham et al. \cite{vukadinovic2013centrality} examined the role of spectral radius in dynamic communication networks, highlighting its correlation with network connectivity and information flow. While these works recognize the spectral radius as a global descriptor, they primarily focus on overall network-level dynamics rather than quantifying the individual contribution of each node to influence maximization across time.\par
In this work, we adopt a complementary perspective and ask a different question: how much does a node contribute to the global connectivity of a temporal network? Motivated by this, we propose Spectral Efficiency Centrality (SEC), a global centrality measure for evolving networks. SEC builds upon both Efficiency Centrality \cite{wang2017new} and spectral-radius-based node importance measures\cite{milanese2010approximating}\cite{vukadinovic2013centrality} by replacing the shortest-path-based efficiency criterion with a spectral criterion based on the normalized reduction in the network's largest eigenvalue and extending it to temporal networks through snapshot-wise evaluation. SEC evaluates a node’s role by measuring the drop in the network’s spectral radius when the node is removed, thereby quantifying each node’s contribution to overall network efficiency. Although SEC is based on the change in spectral radius after node removal, it should not be interpreted merely as a measure of spectral vitality or network robustness. Rather, it is proposed as a node centrality measure that leverages spectral properties to quantify the structural importance of individual nodes. By shifting from path-based efficiency to spectral efficiency, SEC captures global structural disruption rather than merely local communication loss, making it particularly suitable for diffusion-driven and temporally evolving systems. SEC has been shown to effectively identify nodes that contribute to global stability, particularly in disease spreading networks. \par
While SEC provides a principled measure of global node importance, its exact computation requires repeated spectral radius evaluations, which can be computationally expensive \cite{ghosh2011parameterized} for large-scale temporal networks. To address this challenge, we develop an efficient approximation of SEC (ASEC) based on Perron–Frobenius theory and first-order eigenvalue perturbation. This approximation significantly reduces computational cost while preserving the relative influence patterns captured by exact SEC, avoiding repeated eigen-decomposition and requiring only the leading eigenpair, enabling scalable analysis of real-world large-scale temporal networks.
The key contributions of this work are summarized as follows:
\begin{itemize}
    \item We propose Spectral Efficiency Centrality (SEC), a temporal centrality framework that extends spectral-radius-based node importance from static graphs to temporal networks by evaluating the normalized reduction in spectral radius across successive temporal snapshots and aggregating the resulting node importance over time.
    \item We derive an efficient Perron–Frobenius–based approximation (ASEC) using first-order eigenvalue perturbation, significantly reducing computational complexity by avoiding repeated spectral recomputation, thereby enabling scalable analysis of large temporal networks.
    \item We theoretically analyze and comprehensively evaluate SEC and ASEC through synthetic and real-world temporal networks, assessing their theoretical properties, effectiveness, robustness, scalability, and ability to identify influential nodes under multiple diffusion models (SI, SIS, and IC).
\end{itemize}

\section{Related Work}
\label{sec:relatedW}
The study of centrality measures has received considerable attention in the field of network science, providing different perspectives on node importance. Over the past decade, several centrality measures have been developed and extended to detect important nodes in temporal graphs\cite{estrada2013communicability} \cite{pan2011path}\cite{takaguchi2016coverage}, including betweenness, degree, eigenvector, and closeness centralities, as well as the k-shell decomposition method \cite{kitsak2010identification}. In this section, we critically review structural, path-based, and spectral approaches, highlighting limitations that motivate our framework. \par
\subsection{Structural and Path-Based Centralities in Temporal Networks}
Degree centrality is the easiest way to find node importance. Holme et al. \cite{holme2012temporal} provided one of the foundational discussions on how degree centrality can be adapted to dynamic settings by considering the number of time-respecting interactions rather than static connections. Uddien et al.\cite{uddin2011time} proposed a time-sensitive method for measuring degree centrality, known as Time Scale Degree Centrality (TSDC). This approach considers both the existence of links between nodes and the duration of those connections within the network. However, it needs additional processing memory to store the adjacency matrices that keeps record of the number of edges between nodes and the time at which the interaction started between them. Kostakos \cite{kostakos2009temporal} analyzed the temporal evolution of degree and interaction bursts, showing how time dependency alters connectivity patterns. Tang et al.\cite{tang2010analysing} further developed temporal network metrics, introducing time-resolved degree measures to study dynamic communication networks. Nicosia et al. \cite{nicosia2013graph} developed a degree-based method to quantify the activity and influence of nodes across temporal layers, highlighting how node importance varies with time. Kim et al. \cite{kim2012temporal} extended classical degree and other centralities into a continuous-time framework, incorporating temporal order and interaction duration. Although these methods include temporal information, they mainly measure local activity. They fail to reflect the influence of a node’s structural position on the spreading process within the network. \par
To address this, Kitsak \textit{et al.} \cite{kitsak2010identification} demonstrated that a node’s degree alone is not sufficient to determine its spreading influence. Instead, a more reliable indicator is the node’s location within the network. Li \textit{et al.} \cite{li2013efficient} developed an approach to efficiently determine and update the core values of affected nodes following graph modifications.  The algorithm iteratively removes nodes, beginning with the lowest degree, while initially assigning color 0 to all nodes. When an edge is inserted, the core number updates are determined using the standard k-core decomposition method \cite{sariyuce2013streaming}. While core-based methods include structural embedding, they are still based on connectivity structure rather than direct diffusion behavior. \par
Betweenness centrality captures how frequently a node lies on shortest paths. Recent advancements in temporal betweenness centrality have introduced various methodologies to address the challenges posed by dynamic networks.  Lee et al. introduced QUBE(Quick algorithm for Updating BEtweenness centrality) \cite{lee2012qube} to address the challenge of updating betweenness centrality in evolving graphs. Instead of recomputing betweenness for all vertices, QUBE identifies a reduced set of affected vertices based on the concept of Minimum Union Cycles (MUCs) and updates their scores efficiently. Naima \cite{naima2025temporal} distinguished between passive and active shortest paths, proposing an improved time complexity for classical betweenness centrality computations.
\begin{table}[H]
\renewcommand{\arraystretch}{6} 
\setlength{\tabcolsep}{10pt} 
\caption{Summary of work done on Temporal Networks}
\label{tab:relatedwork}
\resizebox{\textwidth}{!}{  
\begin{tabular}{ccccccc}
\toprule
\textit{Reference} & \textbf{Centrality Type} & \textbf {Objective} & \textbf {Methodology} & \textbf {Strength} & \textbf {Limitation} & \textbf {Application Area}  \\
\midrule
Holme \textit{et al.} \cite {holme2012temporal} &\multirow{4}{*}{\rotatebox[origin=c]{90}{\LARGE Degree Centrality}}  & \makecell{Introduced temporal degree\\ centrality using time-\\respecting interactions} & \makecell{Counts node interactions\\ in each snapshot} & \makecell{Tracks node activity over\\ time; simple and intuitive.} & \makecell{Cannot capture global\\ influence.} & \makecell{Social networks,\\ communication networks.}\\
Uddin \textit{et al.} \cite{uddin2011time}  & & \makecell{extends traditional degree\\ centrality by incorporating\\ both the presence and\\ duration of links\\} &\makecell{Introduced TSDC, a time-variant\\ extension of degree centrality} & \makecell{Incorporates both the presence\\ and duration of links}& \makecell{needs additional processing\\ memory to store the adjacency\\ matrices} & \makecell{Healthcare networks,\\ inter-organizational collaboration,\\ communication and social\\ networks} \\
Kostakos \textit{et al.} \cite{kostakos2009temporal} & &  \makecell{Studied temporal graphs\\ and activity bursts; analyzed\\ degree evolution.} & \makecell{Measures degree per temporal\\ layer and analyzes burst\\ patterns.} & \makecell{Highlights temporal\\ fluctuations in connectivity.} & \makecell{Ignores long-term structural\\ influence.} & \makecell{Human interaction networks,\\ social media.} \\
Tang \textit{et al.} \cite{tang2010analysing} & &  \makecell{Proposed time-resolved\\ degree measures for \\ dynamic networks} & \makecell{Computes node degree per\\ timestamp and aggregates\\ over time.} & \makecell{Captures dynamic communication\\ patterns.} & \makecell{Computationally intensive\\ for large networks.} & \makecell{Temporal communication\\ networks.} \\  \hline
Lee \textit{et al.} \cite{lee2012qube} &\multirow{4}{*}{\rotatebox[origin=c]{90}{\LARGE Betweenness Centrality}} & \makecell{Efficient update of BC} & \makecell{Update theorem + MUC sets} & \makecell{Computational speedup} & \makecell{Localized,\\ ignores efficiency} & \makecell{Social networks,\\ temporal analysis}\\
Naima \textit{et al.} \cite{naima2025temporal} & & \makecell{Introduced active and\\ passive temporal\\ betweenness centrality.} & \makecell{Analyzed passive\\ and active shortest paths\\in temporal graphs.} & \makecell{Improved time complexity\\ for classical betweenness\\ centrality.} & \makecell{Active paths computation is\\ more complex} & \makecell{Temporal networks,\\ dynamic systems.} \\
Cruciani \textit{et al.} \cite{cruciani2024mantra}&  & \makecell{Efficient approximation of\\ temporal betweenness\\ centrality} & \makecell{Sampling-based approximation\\ method  to estimate temporal\\ betweenness, reducing\\ computational complexity} & \makecell{Scalable to large\\ temporal networks;\\ maintains high accuracy} & \makecell{Limited to betweenness-type\\ measures; may underperform \\ on networks with highly\\ irregular temporal dynamics} & \makecell{Epidemic spreading analysis,\\information diffusion,} \\
Bub \textit{et al.} \cite{buss2020algorithmic} & &  \makecell{Investigated the computational\\ and algorithmic properties of\\ several variants of temporal\\ betweenness centrality.
} & \makecell{Studied strict/non-strict\\ shortest paths, and\\ fastest paths in \\temporal networks.} & \makecell{Provides insights into\\ computational complexity and\\ feasibility of different variants.} & \makecell{Some path variants are\\ P-hard, limiting practical\\ applicability in large\\ networks.} & \makecell{Temporal network analysis,\\ algorithm design,\\ network science research.}  \\ \hline 
Tang \textit{et al.} \cite{tang2010analysing} &\multirow{4}{*}{\rotatebox[origin=c]{90}{\LARGE Closeness Centrality}} & \makecell{Introduced and formalized\\ temporal centrality definitions\\(including temporal closeness)\\ for time-aggregated \\ settings.} & \makecell{Define temporal shortest\\ paths; compute closeness over\\ time-aggregated windows and\\ time-respecting distances.} & \makecell{Foundational definitions used\\ by many later works;\\ practical for snapshot \\ analyses.} & \makecell{Aggregation/windowing choices\\ can bias results; sensitive\\ to chosen time\\ resolution.} & \makecell{Temporal social and\\ communication networks.} \\
Crescenzi \textit{et al.} \cite{crescenzi2020finding} & &  \makecell{Formalize temporal closeness\\ definitions and propose\\ algorithms for finding\\ top-k nodes for temporal\\ closeness} & \makecell{Harmonic temporal closeness\\ definition;  algorithms that\\ exploit temporal distance\\ properties and pruning\\ to accelerate computation} & \makecell{Handles disconnected/\\partially reachable cases\\ via harmonic definition;\\ practical speedups on\\ real datasets.} & \makecell{Performance depends on\\ temporal sparsity and\\ structure; worst-case\\ still expensive.} & \makecell{Large-scale temporal\\ datasets,  streaming\\ contact networks.} \\
Oettershagen \textit{et al.} \cite{oettershagen2022computing} & & \makecell{Provide efficient exact\\ and heuristic algorithms to\\ compute top-k nodes by\\ temporal closeness} & \makecell{New minimum-duration-path\\ algorithm on temporal graphs plus\\ pruning to compute\\ top-k closeness values\\ efficiently.} & \makecell{Scales much better than\\ all-pairs temporal shortest-path \\ methods for large graphs\\ when only top-k is needed.} & \makecell{Exact computation still\\ costly  for very large,\\ dense temporal graphs.} & \makecell{Large temporal communication\\ networks where identifying\\ few top spreaders is required.} \\ \hline 
Taylor \textit{et al.} \cite{taylor2017eigenvector} &  \multirow{4}{*}{\rotatebox[origin=c]{90}{\LARGE Eigenvector Centrality}} & \makecell{Generalize any eigenvector-based\\ centrality to temporal networks\\ via a joint centrality\\ that captures node x\\ time importance.} & \makecell{Build a supracentrality matrix\\ by coupling centrality matrices\\ of individual time layers;\\ dominant eigenvector\\ gives joint centrality} & \makecell{Principled, linear-algebraic\\ framework; yields joint\\/marginal/conditional\\ centralities and analytic limits\\ for weak/strong coupling} & \makecell{Supracentrality matrix size\\ grows with N*T;\\computational  cost\\ for very large N.} & \makecell{Temporal social, citation,\\ co-authorship, multiplex datasets.}\\
Yin \textit{et al.} \cite{yin2018inter} & & \makecell{Propose an Improved\\ Eigenvector-based Centrality\\ (IECM) that uses inter-layer\\ similarity (coupling strength)\\ to better detect influential\\ nodes} & \makecell{Compute interlayer similarity\\ weights between consecutive\\ layers and couple layer\\ centrality matrices\\ accordingly} & \makecell{Better identification of\\ influential nodes vs. traditional\\ ECM in experiments;\\ accounts for similarity\\ between nearby time layers.} & \makecell{Considers mainly consecutive-\\layer similarity; may miss\\ long-range temporal\\ dependencies.} & \makecell{Dynamic social and\\ communication networks,\\ influence detection.} \\
Taylor \textit{et al.} \cite{taylor2021tunable} & & \makecell{Extend and tune supracentrality\\ via flexible interlayer\\ coupling; unify multiplex\\ and temporal eigenvector\\ centralities.} & \makecell{Introduce general interlayer-adjacency\\ $\tilde{A}$ and coupling strength $\omega$;  use\\ singular-perturbation analysis for\\ weak/strong coupling regimes \\ and compute supracentralities.} & \makecell{Flexible coupling lets one model\\ different temporal coupling patterns;\\ theoretical analysis of coupling\\ limits} & \makecell{More parameters (interlayer topology, $\omega$)\\ add modelling complexity;\\ computation still heavy for large\\ systems.} & \makecell{Multilayer networks across\\ sociology, biology, infrastructure.} \\
Zhao \textit{et al.} \cite{zhao2023general} & & \makecell{Incorporate higher-order \\interactions into supracentrality\\ to capture non-pairwise\\ temporal dependencies.} & \makecell{Build a higher-order\\ supracentrality framework\\ using higher-order adjacency to\\ couple layers and compute \\eigenvector centralities.} & \makecell{Captures richer temporal\\ patterns and higher-order\\ dependencies missed\\ by pairwise coupling.} & \makecell{More complex model and\\ heavier computation; \\ requires higher-order\\ data and careful \\ interpretation.} & \makecell{Temporal systems\\ where higher-order\\ interactions matter} \\ \hline
Rocha \textit{et al.} \cite{rocha2014random} & \makecell{TempoRank} &  \makecell{Ranks influential nodes\\ by preserving the\\ temporal order of\\ interactions.} &\makecell{Computes node importance\\ using a temporal random\\ walk and its stationary\\ distribution.} & \makecell{Captures temporal dynamics\\ and diffusion more effectively\\ than static centrality\\ measures}  & \makecell{Sensitive to temporal\\ resolution and assumes\\ periodic temporal sequences.} & \makecell{Human contact, communication,\\ information diffusion, and\\ epidemic networks.} \\ \hline
Takaguchi \textit{et al.} \cite{takaguchi2016coverage} & \makecell{Temporal Coverage Centrality} &  \makecell{Identifies important temporal\\ vertices based on their\\ participation in the fastest\\ temporal paths.} &\makecell{Measures the fraction of\\ source–destination vertex\\ pairs whose fastest temporal\\ paths pass through a\\ temporal vertex.} & \makecell{Parameter-free, robust to\\ time-scale changes, and\\ efficiently computable using\\ a DAG-based reachability\\ approach.}  & \makecell{Evaluates temporal vertices\\ (node-time pairs) rather than\\ overall node importance,\\ increasing computational\\ complexity for large\\ temporal networks.} & \makecell{Information diffusion,\\influence maximization, and\\ temporal communication \\networks.} \\ \hline
Wang \textit{et al.} \cite{wang2017new} & \makecell{Efficiency centrality} &  \makecell{ To propose a new centrality\\ measure based on the\\ relative change in network\\ efficiency after removing a\\ node} &\makecell{Combines graph efficiency\\ theory with a node-removal\\ strategy, providing a\\ physically interpretable\\ measure of how much each\\ node contributes to overall\\ information transmission efficiency\\ in the network.} & \makecell{Considers both local and\\ global efficiency of\\ information transfer.}  & \makecell{Computationally expensive for\\ large-scale networks\\ due to repeated network\\ efficiency calculations.} & \makecell{Social, transportation,\\ co-authorship, and\\ airline networks} \\ \hline
Oettershagen \textit{et al.} \cite{oettershagen2022temporal} & \makecell{Temporal Walk Centrality} &  \makecell{Identifies influential nodes\\ based on their ability to\\ obtain and disseminate \\information through temporal\\ walks.} &\makecell{Computes node importance by\\ counting weighted temporal\\ walks that respect the\\ chronological order of\\ interactions.} & \makecell{Captures information propagation\\ beyond shortest paths while\\ preserving temporal\\ causality.}  & \makecell{Computationally expensive for\\ large networks, particularly\\ for exact computation using\\ directed line graph\\ expansion.} & \makecell{Information diffusion,\\ communication networks,\\ and fake news monitoring.} \\ 
\bottomrule
\end{tabular}}
\end{table} 
Cruciani et al. \cite{cruciani2024mantra} proposed Mantra, a scalable framework for approximating temporal betweenness centrality using sampling-based techniques. The method efficiently estimates node importance in large temporal networks by sampling time-respecting paths instead of computing all-pairs shortest paths, significantly reducing computational cost. Bub in \cite{buss2020algorithmic} analysed strict, non-strict, foremost, and fastest path variants, highlighting the computational challenges and limitations for exact computation in dynamic networks. Collectively, these works advance the understanding of temporal betweenness, offering both exact and approximate solutions for efficiently identifying influential nodes in evolving networks.\par
Closeness centrality measures the reachability of a node from other nodes in the network. Temporal closeness centrality has been formalised and studied both theoretically and algorithmically to account for time-respecting distances in dynamic networks. Early formalisations and definitions of temporal closeness and time-respecting shortest paths were given by Tang et al. \cite{tang2010analysing}, providing the basis for many later approaches. Crescenzi et al. \cite{crescenzi2020finding} and Oettershagen et al. \cite{oettershagen2022computing} developed efficient algorithms and top-k approaches for computing harmonic temporal closeness on large datasets. Together, these contributions offer foundations, exact constructions, and scalable algorithms for applying temporal closeness in social, communication, transport, and other time-dependent networks. These path-based methods consider temporal ordering, which is important. However, they have two main limitations. First, exact computation is often very expensive for large networks. Second, shortest-path influence does not always match real spreading processes, where diffusion can happen through many parallel paths rather than only shortest ones.\par 
Eigenvector-based methods measure influence using global structure. They have been systematically extended to temporal and multilayer settings through supracentrality and interlayer-coupling frameworks. Taylor et al. \cite{taylor2017eigenvector} introduced a principled supracentrality construction in which centrality matrices from each time layer are coupled into a large supracentrality matrix whose dominant eigenvector yields joint node–time centralities. Yin et al. \cite{yin2018inter} improved this approach by weighting interlayer connections based on similarity. Although this approach considers global structure, it often depends on predefined coupling parameters between layers. The results can change depending on these parameter choices.\par
Beyond eigenvector-based approaches, several temporal centrality measures quantify node importance from different perspectives. TempoRank \cite{rocha2014random} ranks nodes using the stationary distribution of a temporal random walk over ordered snapshots, thereby accounting for the temporal sequence of interactions. Temporal Walk Centrality (TWC) \cite{oettershagen2022temporal} evaluates node importance by counting weighted time-respecting walks that preserve temporal causality, while Temporal Coverage Centrality (TCC) \cite{takaguchi2016coverage} measures the contribution of temporal vertices to the fastest temporal paths between source–destination pairs. Although these methods capture temporal ordering, walk-based connectivity, path coverage, or network robustness, they do not explicitly quantify node importance through changes in the network's global spectral properties, which directly influence information propagation and network connectivity. In addition, most of the studies evaluate these methods structurally, without thorough validation using multiple diffusion models. The reviewed studies in Table \ref{tab:relatedwork}  highlight complementary strengths and weaknesses. 
\subsection{Spectral-Radius based Methods}
Spectral properties of networks have long been associated with diffusion dynamics and epidemic processes. In particular, the largest eigenvalue (spectral radius) of the adjacency matrix plays a fundamental role in determining the epidemic threshold of spreading models. In many diffusion models, the epidemic threshold is inversely proportional to the spectral radius. This means that networks with larger spectral radius are more susceptible to spreading. \par
Spectral-radius-based node importance has been extensively studied in static networks through concepts such as dynamical importance and spectral perturbation analysis. Restrepo \textit{et al.} \cite{restrepo2006characterizing} quantified node importance by measuring the reduction in the largest eigenvalue after node removal, demonstrating its relevance to network dynamics and robustness. Subsequently, Milanese \textit{et al.} \cite{milanese2010approximating} proposed perturbation-based approximations for efficiently estimating spectral changes caused by structural modifications, thereby avoiding repeated eigenvalue computations. Wang et al. \cite{wang2011identifying} quantifies the impact of removing a node on the dominant eigenvalue of the adjacency matrix. Other studies \cite{vukadinovic2013centrality} examined how changes in network structure affect the spectral radius and overall vulnerability.  While these approaches establish the theoretical foundation for spectral-radius-based node importance, they are formulated for static networks or aggregated networks and do not address temporal node ranking or evolving network snapshots. \par
Although the proposed SEC framework is inspired by spectral-radius-based measures developed for static networks, its objective differs fundamentally. Existing methods primarily quantify the structural importance of nodes within a single static graph or estimate spectral changes resulting from structural perturbations. In contrast, SEC extends these concepts to temporal networks by computing spectral node importance independently across successive temporal snapshots and aggregating the resulting importance scores to capture the cumulative influence of nodes over time. More importantly, SEC directly validates node rankings through diffusion simulations under SI, SIS, and IC models. This connects spectral structure with real spreading behavior. The approximate version, ASEC, reduces computational cost while maintaining ranking consistency. Therefore, SEC is not only a spectral centrality measure but a temporally grounded and diffusion-validated centrality framework. Table~\ref{tab:spectral_comparison} summarizes the key methodological differences between representative spectral-radius-based approaches and the proposed framework.
\begin{table}[H]
\renewcommand{\arraystretch}{3} 
\setlength{\tabcolsep}{10pt} 
\caption{Comparison of foundational and representative spectral-radius-based approaches with the proposed SEC framework.}
\label{tab:spectral_comparison}
\resizebox{\textwidth}{!}{  
\begin{tabular}{cccccc}

\toprule
\textit{Criterion} &

\textbf{Restrepo et al. \cite{restrepo2006characterizing}} &
\textbf{Milanese et al. \cite{milanese2010approximating}} &
\textbf{Wang et al. \cite{wang2011identifying}} &
\textbf{Vukadinovic et al. \cite{vukadinovic2013centrality}} &
\textbf{Proposed SEC / ASEC} \\
\midrule

\textbf{Objective}
&
\makecell{Node importance}
&
\makecell{Approximate spectral impact\\ of structural perturbations}
&
\makecell{Identify structurally important\\ nodes using spectral\\ node removal}
&
\makecell{Analyse spectral radius\\ in dynamic communication\\ networks}
&
\makecell{Rank influential nodes in\\ temporal networks using \\ spectral efficiency}
\\

\textbf{Network Type}
&
\makecell{Static}
&
\makecell{Static}
&
\makecell{Static}
&
\makecell{Dynamic communication networks}
&
\makecell{Temporal snapshot-based\\ networks}
\\

\textbf{Primary Focus}
&
\makecell{Node importance}
&
\makecell{Efficient spectral\\ approximation}
&
\makecell{Community-aware node\\ ranking}
&
\makecell{Network-level spectral\\ analysis}
&
\makecell{Temporal node ranking}
\\

\textbf{Node Removal}
&
\makecell{Yes}
&
\makecell{Yes}
&
\makecell{Yes}
&
\makecell{No}
&
\makecell{Yes}
\\

\textbf{Spectral Radius Utilization}
&
\makecell{Exact eigenvalue \\reduction}
&
\makecell{Perturbation-based \\estimation}
&
\makecell{Exact eigenvalue \\recomputation}
&
\makecell{Spectral radius\\ monitoring}
&
\makecell{Exact SEC and approximate\\ ASEC}
\\

\textbf{Approximation Mechanism}
&
\makecell{First-order perturbation}
&
\makecell{First- and higher-order
\\perturbation}
&
\makecell{None}
&
\makecell{None}
&
\makecell{First-order perturbation (ASEC)}
\\
\textbf{Temporal Snapshot Analysis}
&
\makecell{No}
&
No
&
No
&
No
&
Yes
\\

\textbf{Temporal Aggregation}
&
No
&
No
&
No
&
No
&
Yes
\\
\textbf{Diffusion-based Validation}
&
No
&
No
&
No
&
\makecell{Epidemic threshold\\ analysis}
&
\makecell{SI, SIS and IC diffusion\\ models}
\\

\textbf{Theoretical Analysis}
&
\makecell{Perturbation analysis}
&
\makecell{Perturbation analysis}
&
\makecell{Spectral node-removal\\ analysis}
&
\makecell{Spectral radius\\ analysis}
&
\makecell{Approximation error \\analysis}
\\

\bottomrule
\end{tabular}}
\end{table}


\section{Preliminaries}
\label{sec:Preliminaries}
In this section, we will discuss the preliminaries related to temporal networks and various diffusion models used to evaluate the effectiveness of our proposed methods.
\subsection{Definitions}
\begin{Def}(Temporal Network)
A network where interactions among nodes evolve over time can be represented using time-varying graphs. A contact between two nodes $x, y \in N$ is described by a quadruplet $c = (x, y, t, \delta t)$, where $0 \leq t \leq T$ denotes the start time of the interaction and $\delta t$ represents its duration, measured in appropriate temporal units \cite{nicosia2013graph}. 
\end{Def}
\begin{Def}[Spectral Radius]
Let $A_t$ denote the adjacency matrix of the temporal snapshot $G_t$. The \emph{spectral radius} of $A_t$, denoted by $\lambda(A_t)$, is defined as:
\begin{center}
$\lambda(A_t)=\max_i |\lambda_i(A_t)|$,
\end{center}
where $\lambda_i(A_t)$ are the eigenvalues of $A_t$. Since each temporal snapshot is modeled as an undirected graph, $A_t$ is symmetric and all its eigenvalues are real. Hence, the spectral radius is simply the largest eigenvalue of $A_t$. It serves as a global structural descriptor and is widely used to characterize network connectivity, diffusion dynamics, and robustness.
\end{Def}
\begin{Def}(Largest Connected Component (LCC):)
For a graph $G=(V,E)$, the Largest Connected Component (LCC) is the connected component containing the maximum number of vertices. Its size is given by
\[
LCC(G)=\max_{C_i\in\mathcal{C}(G)} |C_i|,
\]
where $\mathcal{C}(G)$ is the set of connected components of $G$. 
\end{Def}
\begin{Def}(Perron-Frobenius Theorem)
It states that a  non-negative irreducible matrix has a unique, positive, simple eigenvalue (the Perron-Frobenius eigenvalue) equal to its spectral radius, and the associated eigenvector is strictly positive \cite{perron2007perron}.
    
\end{Def}
\begin{Def}(Network Efficiency)
The network efficiency $E(G)$ is the average of the inverse of the shortest path length $d_{ij}$ \cite{latora2001efficient} and is given as: 
\begin{center}
    $E(G)=\frac{1}{N*(N-1)}\sum_{i \neq j \in G} \frac{1}{d_{ij}}$
\end{center}
\end{Def}
\begin{Def}(Efficiency Centrality (EffC))
It is a method used to assess the importance of an individual node across the entire network. It finds important nodes by removing them one at a time and seeing how much the overall efficiency of the network drops. The EffC of node $v$ is given as \cite{wang2017new}:
\[
    C_{Effc}(v)=\frac{E(G)-E(G')}{E(G)}, v \in V
    \]

Here, $E(G')$ gives the efficiency centrality of the network $G'$ when node $v$ is removed.
\end{Def}
\subsection{Diffusion Models}
To evaluate the spreading capability of nodes selected by different centrality measures, we employ three widely used diffusion models on temporal networks: Susceptible–Infected (SI), Susceptible–Infected–Susceptible (SIS), and Independent Cascade (IC). These models capture different epidemic and information propagation dynamics and enable a comprehensive assessment of influence in evolving networks.\par
Let $G_t = (V_t, E_t)$ denote the temporal network snapshot at time $t$, where $V_t$ is the set of nodes present in snapshot $t$ and $E_t$ represents the edges active during that time window.
\begin{enumerate}
    \item {Susceptible–Infected (SI) Model:}
In the SI diffusion model, nodes exist in one of two possible states: susceptible $(S)$ or infected $(I)$. During each time step, an infected node spreads the infection to its susceptible neighbors with transmission probability $\beta$. After infection, a node permanently remains in the infected state throughout the diffusion process.
\item {Susceptible–Infected–Susceptible (SIS) Model:}
In this model, nodes alternate between susceptible and infected states without permanent immunity. Infection spreads with probability $\beta$, and infected nodes recover with probability $\gamma$, returning to the susceptible state. 
\item {Independent Cascade (IC) Model:}
This model is a discrete-time stochastic process. At time $t$, newly infected nodes attempt to activate each currently inactive neighbor with probability $p$. Each activation attempt is performed only once, and all attempts are independent.
The process continues iteratively until no further activations occur.
\end{enumerate}
\section{Proposed Method}
The structural integrity of a network is strongly influenced by nodes whose removal significantly alters its global connectivity. In particular, variations in the spectral radius of the adjacency matrix provide a principled way to assess changes in diffusion capacity and overall robustness. Despite this, many existing temporal centrality measures emphasize path-based or locally aggregated properties, without directly quantifying the global spectral impact of node failures. \par
To address this limitation, we propose Spectral Efficiency Centrality (SEC), a temporal centrality framework that evaluates node importance through spectral vitality. Specifically, SEC measures the relative change in the network’s spectral radius induced by node removal at each temporal snapshot. By aggregating these effects across time, the proposed metric identifies nodes that consistently contribute to maintaining global structural cohesion and diffusion potential in evolving networks.

\label{sec:proposedmethod}
\subsection{Spectral Efficiency Centrality (SEC)}
SEC is developed by extending the concept of Efficiency Centrality (EffC) \cite{wang2017new} and spectral vitality\cite{restrepo2006characterizing}\cite{milanese2010approximating}. In traditional Efficiency Centrality, node importance is quantified by the reduction in the overall network efficiency after removing a node, thereby emphasizing pairwise shortest-path connectivity. In contrast, SEC measures node importance through the normalized reduction in the spectral radius of the network after node removal, allowing the influence of a node to be assessed from the perspective of global network connectivity and diffusion capability. While spectral-radius-based node importance has been previously investigated for static networks through concepts such as dynamical importance and spectral perturbation analysis, the proposed SEC extends these principles to temporal networks by evaluating spectral importance across successive temporal snapshots and aggregating the resulting scores to characterize the cumulative structural influence of nodes over time. Since the spectral radius is closely related to epidemic thresholds and spreading dynamics, this formulation provides a global perspective on node influence while remaining computationally tractable through the proposed ASEC approximation.

\par
The proposed SEC framework adopts a snapshot-based representation of temporal networks. The temporal edge stream is first chronologically ordered according to timestamps and partitioned into successive time windows. Rather than relying solely on fixed temporal segmentation, structurally meaningful snapshots are identified through a Jaccard similarity-based change detection procedure applied to consecutive windows. Consequently, the temporal ordering of network evolution is preserved while each snapshot represents a distinct structural state of the evolving network. To understand this, let a temporal network be represented as a sequence of undirected graph snapshots $G_1, G_2,\dots, G_n$, where each $G_t=(V_t, E_t)$ denotes the network at time $t$. Let $A_t$ denote the adjacency matrix of the snapshot graph $G_t$.  We define the spectral efficiency of a node $v$  at snapshot $t$ as:
\begin{equation}
    C^{(t)}_{SEC}(v)=\frac{\lambda(A_t)-\lambda(A_t- v)}{\lambda(A_t)},\label{eq:sec}
\end{equation}
Here,  $\lambda(A - v)$ is the spectral radius of the snapshot graph after node $v$ is removed.
The overall Spectral Efficiency Centrality of node $v$  across the temporal network is:
\begin{equation}
    C_{SEC}(v)=\frac{\sum_{t \in {T_v}}C^{(t)}_{SEC}(v) }{T},
\end{equation}
Here, $T$ is the total number of snapshots.\par
Unlike existing centrality metrics, SEC captures the global importance of a node by evaluating how its removal affects the network’s spectral properties, effectively identifying critical spreaders whose presence is vital for information or epidemic propagation. Its temporal formulation considers the persistence and timing of interactions across snapshots, making it particularly suitable for evolving networks such as email communications, social media, or transportation systems. \par
Beyond its intuitive interpretation, a centrality measure must satisfy basic theoretical requirements to be meaningful and comparable across networks. We therefore analyze several fundamental properties of SEC, including boundedness and its response to structural perturbations. These properties ensure that SEC remains numerically stable, interpretable, and consistent across different network topologies and sizes.
\subsubsection{Theoretical Properties:}
\label{subsec:theoriticalprop}
We discuss several fundamental properties of the proposed Spectral Efficiency Centrality (SEC). These properties justify its mathematical soundness and its interpretation as a spectral perturbation-based measure. \par
Let $A$ be the adjacency matrix of a connected undirected graph $G$ with spectral radius $\lambda(A)$ and corresponding Perron eigenvector $x$ satisfying $Ax = \lambda(A)x$, where $x_i > 0$ for all $i$.\par
Property 1 - Boundedness of SEC:
For any temporal snapshot $G_t$ and any node $v \in V(G_t)$, the snapshot-level spectral efficiency satisfies $0 \leq C^{(t)}_{SEC}(v) \leq 1$ and consequently the aggregated $C_{SEC}(v) \in  [0,1]$. Since node removal from an undirected graph cannot increase the spectral radius of an undirected graph, the numerator is non-negative, and normalisation by $\lambda(A)$ ensures the upper bound. We have $0 \le \lambda(A) - \lambda(A - v) \le \lambda(A)$, which implies $0 \le C_{SEC}(v) \le 1$. \par
Property 2 - Global Sensitivity:
SEC assigns significantly higher scores to structurally central nodes whose removal induces large reductions in the spectral radius. In particular, nodes with high connectivity or strong participation in dense substructures (e.g., hubs or core nodes) attain higher $C_{SEC}$ values.\par
Property 3 - Robustness to peripheral nodes: 
SEC is robust to the removal of peripheral or weakly connected nodes. Nodes that contribute little to the global structure of the network induce only negligible changes in the spectral radius when removed, resulting in $C_{SEC}$ values close to zero. We empirically validate these properties using controlled experiments on synthetic networks in section \ref{sec:empericalvalidation}.
\subsubsection{Computational Bottleneck:}
While SEC effectively captures the global structural importance of nodes, its reliance on repeated eigenvalue computations makes it computationally 
expensive for large or frequently evolving networks. For a temporal snapshot with $n$ nodes, computing $C_{SEC}$ requires evaluating the spectral radius of the network after the removal of each node. This results in $O(n)$ spectral radius computations per snapshot.\par
In the worst case of dense graphs, each spectral radius computation requires a full eigenvalue decomposition of an $n \times n$ adjacency matrix, which has time complexity $O(n^3)$. Consequently, the overall worst-case computational complexity of SEC is $O(n^4)$ per snapshot. When extended to temporal networks with $T$ snapshots, the total complexity scales as $O(T n^4)$, rendering exact computation infeasible for large networks.\par
For sparse graphs, iterative methods such as the power method \cite{golub2013matrix} or Lanczos algorithm \cite{lanczos1950iteration} can be employed to compute the leading eigenvalue more efficiently. In this case, each spectral radius computation typically costs $O(m)$ per iteration, where $m$ denotes the number of edges, and the total cost per computation is $O(km)$ for $k$ iterations. However, since this procedure must still be repeated for each node removal, the total cost per snapshot becomes $O(nm)$, which reduces to $O(n^2)$ for sparse networks where $m = O(n)$. Despite this improvement, the repeated spectral recomputation across nodes and snapshots remains computationally expensive in large-scale temporal settings.\par
To address this issue, we develop an efficient approximation of SEC based on first-order eigenvalue perturbation theory. The proposed approximation avoids repeated spectral recomputation and requires computing the leading eigenpair only once per snapshot. It significantly reduces the computational overhead while preserving the relative influence patterns captured by SEC, enabling scalable analysis of real-world temporal networks.
\subsection{Approximate Spectral Efficiency Centrality (ASEC)}
\label{subsec:ASEC}
The proposed approximation (ASEC) is based on classical first-order eigenvalue perturbation theory for symmetric matrices \cite{stewart1990matrix} \cite{kato1966perturbation}, which states that small, localised changes to a matrix induce changes in its dominant eigenvalue that are proportional to a quadratic form involving the leading eigenvector. Throughout the analysis, each temporal snapshot is represented by an undirected and unweighted graph, resulting in a symmetric adjacency matrix. Furthermore, the dominant eigenvalue is assumed to be simple, ensuring the applicability of first-order eigenvalue perturbation theory. Since the removal of a single node typically introduces a localized structural modification, the resulting perturbation is assumed to be sufficiently small such that higher-order perturbation terms can be neglected.\par
Let $A_t$ denote the adjacency matrix of a snapshot graph with spectral radius $\lambda(A_t)$ and associated Perron-Frobenius eigenvector $x$, normalized such that $\sum_{i} x_i^2 = 1$. For a localized perturbation $\Delta A_t$, the resulting change in the spectral radius satisfies
\[
\Delta \lambda \approx \frac{x^\top (\Delta A_t)x}{x^\top x}.
\] 
The above expression represents the first-order perturbation of the dominant eigenvalue caused by a localized modification of the adjacency matrix. Since the dominant eigenvector captures the principal structural characteristics of the network, the first-order approximation provides an efficient estimate of the change in spectral radius without requiring complete eigendecomposition after each node removal.\par
In the case of node removal, the perturbation is concentrated around the incident edges of the removed node, implying that the corresponding reduction in the spectral radius is approximately proportional to the squared entry of the leading eigenvector associated with that node ($x_v$). Although node removal is not strictly an infinitesimal perturbation, first-order eigenvalue perturbation provides an efficient approximation for estimating the corresponding spectral change and has been widely adopted in spectral node importance and network robustness studies \cite{restrepo2006characterizing}\cite{milanese2010approximating}.\par
\[
\Delta \lambda \approx  x_v^2.
\] 
For adjacency matrices of connected graphs, Perron–Frobenius theory guarantees that the spectral radius is simple and associated with a strictly positive eigenvector \cite{meyer2023matrix}, ensuring the validity of the perturbation expansion.

\[
\lambda(A_t - v) \ge \lambda(A_t) - \lambda(A_t)\, x_v^2
\]
Substituting the approx value of $\lambda(A_t - v)$ in Eq. \eqref{eq:sec}, we get the Approximate Spectral Efficiency Centrality (ASEC) of a node $v$ at snapshot $t$ as:
\[
\mathrm{C^{(t)}_{ASEC}}(v) \approx \frac{\lambda(A_t) - ( \lambda(A_t) - \lambda(A_t)\, x_v^2)}{\lambda(A_t)} \approx x_v^2 \]
ASEC is intended as an efficient approximation to SEC rather than an exact replacement. The validity of this approximation is further assessed through theoretical and empirical analyses presented in Section \ref{sec:expeval}.
Importantly, the computation of $x$ and $\lambda(A_t)$ is already required for the $C_{SEC}$ formulation. Therefore, once the leading eigenpair $(\lambda(A_t), x)$ is computed for a snapshot, the $C_{ASEC}$ value of all nodes can be obtained directly from the squared components $x_v^2$ without performing additional eigenvalue computations. 
The overall Approximate Spectral Efficiency Centrality (ASEC) of node $v$  across the temporal network is:
\begin{equation}
    C_{ASEC}(v) \approx \frac{\sum_{t \in {T_v}} C^{(t)}_{ASEC}(v)}{T},
\end{equation}
Here, $T$ is the total number of snapshots.
The proposed approximation ASEC requires only a single eigenvalue–eigenvector computation, reducing the computational cost by a factor of $n$ while preserving the relative importance of nodes. The first-order approximation provides an efficient estimate of SEC by retaining only the dominant perturbation term associated with node removal. Consequently, the approximation error depends exclusively on the neglected higher-order perturbation terms. Proposition 1 formalizes this relationship by expressing the approximation error as a normalized higher-order remainder term.\par
\begin{proposition}[Approximation Error]
Let $G=(V,E)$ be a connected graph with adjacency matrix $A$.
Let $\lambda(A)$ be the spectral radius of $A$, and let
$\mathbf{x}$ be the associated Perron--Frobenius eigenvector
normalized such that $\|\mathbf{x}\|_2^2 = 1$.

For a node $v \in V$, we define the Spectral Efficiency Centrality as
\[
\mathrm{C_{SEC}}(v) = \frac{\lambda(A) - \lambda(A - v)}{\lambda(A)},
\]
and its Perron--Frobenius approximation as
\[
\mathrm{C_{ASEC}}(v) \approx x_v^2.
\]

Then the absolute approximation error
\[
\varepsilon(v)=
|\mathrm{C_{SEC}}(v) - \mathrm{C_{ASEC}}(v)|
\]
satisfies
\[
\varepsilon(v)
=
\frac{R_v}{\lambda(A)}
\]
where $R_v \ge 0$ is the higher-order remainder term arising from
eigenvalue perturbation.
\end{proposition}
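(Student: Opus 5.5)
The plan is to obtain the identity $\varepsilon(v)=R_v/\lambda(A)$ by writing the exact spectral radius of $A-v$ as ``first-order term plus remainder'', and then dividing by $\lambda(A)$. Throughout, write $\lambda=\lambda(A)$ and $\lambda'=\lambda(A-v)$. Here $A-v$ is the principal submatrix of $A$ on $V\setminus\{v\}$. Since $G$ is connected, the Perron--Frobenius Theorem (stated in the Preliminaries) gives a simple $\lambda$ and $x_i>0$ for all $i$, which is what makes the expansion below well defined.

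\textbf{Step 1 (exact decomposition).} Define the signed remainder by
\[
\lambda' = \lambda - \lambda x_v^2 + r_v ,
\]
that is, $r_v := \lambda' - \lambda + \lambda x_v^2$. Substituting into the definition of $\mathrm{C_{SEC}}(v)$ gives
\[
\mathrm{C_{SEC}}(v)=\frac{\lambda x_v^2 - r_v}{\lambda}=x_v^2-\frac{r_v}{\lambda}.
\]
Subtracting $\mathrm{C_{ASEC}}(v)=x_v^2$ and taking absolute values yields $\varepsilon(v)=|r_v|/\lambda$. Setting $R_v:=|r_v|\ge 0$ gives exactly the claimed form. This part is purely algebraic. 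It uses only that $\lambda>0$, which holds because $G$ is connected with at least one edge.

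\textbf{Step 2 (justify that $R_v$ is a genuine higher-order remainder).} For the statement to have content beyond a definition, I would show that $r_v$ is of higher order in the perturbation than the retained term $\lambda x_v^2$. The plan is a Rayleigh-quotient estimate. Take as test vector $y$ the restriction of $x$ to $V\setminus\{v\}$, so that $\|y\|^2=1-x_v^2$. Using $A_{vv}=0$ and $(Ax)_v=\lambda x_v$,
\[
y^\top (A-v)\, y = x^\top A x - 2x_v (Ax)_v = \lambda - 2\lambda x_v^2 .
\]
Hence
\[
\lambda' \ \ge\ \lambda\,\frac{1-2x_v^2}{1-x_v^2},
\qquad\text{which gives}\qquad
r_v \ \ge\ -\frac{\lambda x_v^4}{1-x_v^2}.
\]
This lower bound is $O(x_v^4)$, i.e.\ genuinely second order relative to the first-order term $\lambda x_v^2$. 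For an upper bound I would use eigenvalue interlacing, $\lambda'\le\lambda$, together with the standard first-order expansion of the simple eigenvalue $\lambda$ for symmetric matrices, as in the ASEC derivation, so that any deviation is absorbed into terms beyond first order.

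\textbf{Main obstacle.} The difficult part is making the upper side of $r_v$ sharp. Deleting a node is not an infinitesimal perturbation. Moreover, the naive quadratic form $x^\top\Delta A\,x$ gives $2\lambda x_v^2$, not $\lambda x_v^2$. The discrepancy is compensated by the eigenvector correction coming from the deleted coordinate. To control this I would first try a second-order perturbation expansion in the spectral basis of $A$. This should bound the positive part of $r_v$ by a quantity of the form $\lambda x_v^2\cdot O\!\left(x_v^2+(\text{local eigenvector mass})/(\lambda-\lambda_2)\right)$, which involves the spectral gap $\lambda-\lambda_2$. The proposition itself, however, only requires $R_v\ge0$ and the identity $\varepsilon(v)=R_v/\lambda$, and both follow already from Step~1.
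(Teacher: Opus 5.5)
Your Step 1 proves the proposition as stated, and its skeleton is the same as the paper's: write $\lambda(A-v)$ exactly as $\lambda-\lambda x_v^2$ plus a remainder, then divide by $\lambda$. The two routes diverge exactly where the paper's proof is unsound, and in both places yours is the sound one.

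First, the paper obtains the first-order term by asserting $x^\top\Delta A\,x=\lambda x_v^2$. As you point out, deleting row and column $v$ gives $x^\top\Delta A\,x=2x_v(Ax)_v=2\lambda x_v^2$. You avoid this error by defining $r_v$ directly.

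Second, the paper gets $R_v\ge 0$ by declaring its \emph{signed} remainder nonnegative; the paper's $R_v$ is your $-r_v$. This amounts to claiming $\mathrm{C_{SEC}}(v)\ge x_v^2$ at every vertex, which is false. On the cycle $C_n$ with $n\ge 5$, every vertex has $\mathrm{C_{SEC}}(v)=1-\cos(\pi/n)<1/n=x_v^2$. At the centre of the star $K_{1,n}$, by contrast, $\mathrm{C_{SEC}}(v)=1>1/2=x_v^2$. The star also violates the paper's earlier inequality $\lambda(A_t-v)\ge\lambda(A_t)-\lambda(A_t)x_v^2$. Even with the correct first-order term, your Rayleigh computation shows that $\lambda-2\lambda x_v^2-\lambda(A-v)\le 0$ always, the opposite of the asserted sign.

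So the remainder has no fixed sign, and your choice $R_v:=|r_v|$ is what actually makes the identity $\varepsilon(v)=R_v/\lambda(A)$ with $R_v\ge 0$ hold. The price, which you acknowledge, is that the proposition becomes a definition rather than an estimate.

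Step 2 is not needed for the proposition, and one part of it would not go through. The Rayleigh-quotient bound is correct and gives a genuine one-sided estimate, $\mathrm{C_{SEC}}(v)\le x_v^2/(1-x_v^2)$. Hence SEC can exceed ASEC by at most $x_v^4/(1-x_v^2)$.

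The other side cannot be made higher order without further hypotheses. Interlacing yields only $r_v\le\lambda x_v^2$, and the cycle essentially attains this: $r_v=2/n-O(n^{-2})$ against $\lambda x_v^2=2/n$. You should therefore drop the sentence claiming that interlacing plus the first-order expansion pushes the deviation into higher-order terms. Any two-sided bound must involve something like the spectral gap, as your final paragraph anticipates; for $C_n$, $\lambda-\lambda_2=\Theta(n^{-2})$.
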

\begin{proof}
Let $\Delta A$ denote the perturbation matrix corresponding to the deletion of node $v$ and all its incident edges, such that
\[
\Delta \lambda \approx \frac{x^\top (\Delta A)x}{x^\top x}.
\] 
From first-order eigenvalue perturbation theory for symmetric matrices,
the dominant eigenvalue satisfies:
\begin{equation}
\lambda(A - v)
=
\lambda(A)
-
x^\top \Delta A x
-
R_v, \label{eq:error}
\end{equation}
where $R_v \ge 0$ represents higher-order terms neglected by the
first-order approximation.

For node removal, the quadratic form reduces to
\[x^\top \Delta A x
=
\lambda(A)\,x_v^2.
\]

Substituting this in Eq. \eqref{eq:error}, we get
\[
\lambda(A - v)
=
\lambda(A)
-
\lambda(A)x_v^2
-
R_v.
\]
Dividing both sides by $\lambda(A)$ gives the exact SEC:
\[
\mathrm{C_{SEC}}(v) = x_v^2 + \frac{R_v}{\lambda(A)}.
\]

Since $\mathrm{C_{ASEC}}(v) \approx x_v^2$, the absolute
approximation error is
\[
|\mathrm{C_{SEC}}(v) - \mathrm{C_{ASEC}}(v)|=
\frac{R_v}{\lambda(A)}.
\]
\end{proof}
This shows that the approximation error depends solely on the normalized higher-order remainder term $R_v$. Therefore, the accuracy of ASEC is expected to improve whenever the contribution of higher-order perturbation terms is negligible. The higher-order remainder term $R_v$ is not directly computable, as it represents the neglected higher-order terms of the eigenvalue perturbation expansion. Therefore, its practical effect and theoretical expectation are examined empirically in Section \ref{subsec:Emeracc} through approximation error analysis and ranking consistency evaluations.
\section{Experimental Evaluation} 
\label{sec:expeval}
To comprehensively evaluate the proposed SEC and ASEC framework, the experimental protocol examines four complementary aspects: (i) diffusion capability using SI, SIS, and IC models; (ii) ranking fidelity of the proposed approximation (ASEC) through approximation error, Top-k overlap, and Kendall's $\tau$; (iii) computational efficiency through runtime comparisons; and (iv) robustness analysis using spectral radius and largest connected component degradation under targeted node removal. Together, these experiments evaluate the effectiveness of SEC from temporal, diffusion, computational, and structural perspectives.
\subsection{Setup} The algorithms were implemented in Python 3.11 using the NumPy, SciPy, NetworkX, Pandas, and Matplotlib libraries. The experiments were conducted on Google Colab using CPU-based computation with approximately 16 GB RAM. For all diffusion experiments, 10000 Monte Carlo simulations were performed using a fixed random seed (42) to ensure reproducibility. The implementation parameters for each dataset are summarised in Table \ref{tab:topologySD} and \ref{tab:topologyRW}.
\subsubsection{Datasets}
\label{sec:datasets}
\begin{enumerate}
\item {Synthetic Datasets:} To empirically validate the theoretical properties of SEC discussed in \ref{subsec:theoriticalprop}, we conducted controlled experiments on synthetic temporal networks generated using Erdős–Rényi, Barabási–Albert, and Watts–Strogatz models.  Each configuration is averaged over 20 independent runs.  Table \ref{tab:topologySD} shows the basic topological and key properties of the three models.
\begin{enumerate}
    \item {Erdős–Rényi(ER) model:} It generates a random graph $G(n,p)$, with $n$ number of nodes, where $p$ is the probability that any pair of distinct nodes is connected by an edge.
    \item {Barabási–Albert(BA) model:} It creates networks that grow by preferential attachment. Nodes are added one by one, and new nodes prefer to connect to $m$ already well-connected nodes. 
    \item{Watts–Strogatz(WS) model:} It produces a small-world network $G(n,k,\beta)$, where $n$ is the number of nodes, $k$ is the number of nearest neighbors each node is initially connected to in a regular ring lattice, and $\beta$ is the probability of rewiring each edge.
\end{enumerate}
\begin{table}[H]
\renewcommand{\arraystretch}{0.95} 
\setlength{\tabcolsep}{10pt} 
\centering
\caption{Basic topological characteristics of synthetic network models}
\label{tab:topologySD}
\resizebox{\textwidth}{!}{
\begin{tabular}{lcccccc}
\toprule
\textbf{Model} & \textbf{Type} & \textbf{Key Feature} & \textbf{Clustering} & \textbf{Degree Distribution} & \textbf{$n$} & \textbf{Parameter} \\
\midrule
\multirow{3}{*}{ER} & \multirow{3}{*}{Random} &  \multirow{3}{*}{Purely random edges} &  \multirow{3}{*}{Low} &  \multirow{3}{*}{Poisson (No hubs)} & 50  & \multirow{3}{*}{$p=0.1$} \\
& & & & & 100 &  \\
& & & & & 200 & \\
\hline

\multirow{3}{*}{BA} & \multirow{3}{*}{Scale-Free}& \multirow{3}{*}{Growth and Preferential Attachment} & \multirow{3}{*}{Low/Moderate} &  \multirow{3}{*}{Power -law (Hubs)} & 50  & \multirow{3}{*}{$m=3$}   \\
& & & & & 100 & \\
& & & & & 200 &  \\

\hline
\multirow{3}{*}{WS}& \multirow{3}{*}{Small-World} & \multirow{3}{*}{Shortest path length} & \multirow{3}{*}{High} &\multirow{3}{*}{Regular to Random} & 50  & \multirow{3}{*}{$k=4,\ p=0.1$}   \\
& & & & &  100 &   \\
& & & & &  200 &   \\
\bottomrule
\end{tabular}}
\end{table}
\item {Real-world Temporal Datasets: }To systematically assess the performance of the proposed SEC and ASEC measure, we conducted experiments on three real-world temporal network datasets. A comprehensive description of the three selected networks is presented below. Table \ref{tab:topologyRW} gives the statistical overview of the three networks.
\begin{enumerate}
\item Enron email communication network \footnote{https://snap.stanford.edu/data/email-Enron.html}: It covers all the email communication within a dataset of around half million emails.  Nodes of the network are email addresses and if an address $x$ sent at least one email to address $y$, the graph contains an undirected edge from $x$ to $y$. 
\item CollegeMsg temporal network \footnote{https://snap.stanford.edu/data/CollegeMsg.html}: This dataset consists of private messages exchanged on an online social network at the University of California, Irvine. Each temporal edge (u,v,t) represents a private message sent from user u to user v at time t.
\item Math Overflow temporal network \footnote{https://snap.stanford.edu/data/sx-mathoverflow.html} : This dataset represents a temporal interaction network from the Stack Exchange website Math Overflow. A directed edge $(x,y,t)$ denotes an interaction occurring at time $t$, where user $x$ interacts with user $y$. Since SEC is formulated for undirected temporal networks, the directed interactions were converted into undirected edges during snapshot construction by ignoring edge directions.

\end{enumerate} 
\begin{table}[H]
\renewcommand{\arraystretch}{0.95} 
\setlength{\tabcolsep}{10pt} 
\centering
\caption{Statistical Overview of the three real-world networks.}
\label{tab:topologyRW}
\resizebox{\textwidth}{!}{
\begin{tabular}{lcccccc}
\toprule
\textbf{Network} & \textbf{Type} & \textbf{$n$} & \textbf{$m$} &\textbf{Snapshots} & \textbf{Time-window(in days)} & \textbf{Seed Nodes}\\
\midrule
Enron Email & Moderate-sized & 986  & 332334 & 12 & 7 & 5  \\ 
CollegeMsg & Moderate-sized & 1899  & 59835 & 19 & 10 & 10  \\ 
Math Overflow & Large and sparse & 24818 & 506550 & 27 & 90 & 15 \\
\bottomrule
\end{tabular}}
\end{table}
\textit{Time-window:} The time-window sizes were selected according to the scale, interaction frequency, and temporal characteristics of each dataset to achieve an appropriate balance between temporal resolution and structural stability. For the Enron Email dataset, a 7-day window was adopted to capture weekly communication patterns while avoiding excessively sparse daily snapshots. The CollegeMsg dataset exhibits relatively frequent online interactions over a longer observation period; therefore, a 10-day window was chosen to preserve the temporal evolution of user communications while maintaining sufficient interaction density within each snapshot. For the large-scale MathOverflow network, a coarser 90-day window was employed to ensure adequate edge density in each temporal layer and to prevent excessive fragmentation resulting from sparse interactions. These dataset-specific choices preserve meaningful temporal dynamics while maintaining structurally informative snapshots suitable for temporal centrality analysis.\par
\textit{Seed-size:} \textit{Seed-size:} For each centrality measure, the top-$k$ ranked nodes were selected as the initial seed set for the diffusion experiments. The seed size was chosen as approximately $0.5\%$ of the total number of nodes in each network to provide a fair and consistent evaluation across datasets of different sizes. Accordingly, 5 and 10 seed nodes were selected for the Enron Email and CollegeMsg datasets, respectively. For the large MathOverflow network, a smaller seed set of 15 nodes was deliberately adopted so that the diffusion spreads properly without quickly saturating the
network. This helps ensure a fair comparison between methods.

\end{enumerate}
\subsubsection{Baseline Centrality Measures} We computed SEC and ASEC for every temporal snapshot, and compared it against several well-known temporal centrality measures given below.
\begin{enumerate}
\item {Temporal Walk Centrality(TWC):} It evaluates node importance by modeling information flow through time-respecting random walks in temporal networks. Unlike shortest-path-based measures, it captures dynamic information propagation over feasible temporal paths, enabling effective identification of influential nodes \cite{oettershagen2022temporal}.
\item{Efficiency Centrality (EffC):} It is introduced to rank spreaders across the entire network, identifying influential nodes by evaluating the change in overall network efficiency when each node is removed \cite{wang2017new}.
\item {Supracentrality:} It is a systematic extension of centrality measures that applies to any eigenvector-based metric. Specifically, a temporal network with $N$ nodes is modelled as a sequence of $T$ layers, each representing the network within a given time window. The centrality matrices of these layers are then combined into a single supracentrality matrix of dimension $NT \times NT$, whose dominant eigenvector yields the centrality score of each node $i$ at each time 
$t$ \cite{taylor2017eigenvector}. 
\item {Temporal Coverage Centrality(TCC):} It measures the importance of temporal vertices based on their participation in the fastest time-respecting paths within a temporal network. It provides a parameter-free and efficient way to identify critical temporal vertices that facilitate information flow \cite{takaguchi2016coverage}.
\item {TempoRank:} It is a temporal network centrality measure that extends random-walk-based ranking to dynamic networks by explicitly incorporating the temporal ordering of interactions. It computes node importance from the stationary distribution of random walkers over time, capturing both structural connectivity and temporal dynamics \cite{rocha2014random}.
\item {Temporal Coreness:} It is an algorithm for maintaining the core number of each node in a temporal graph. The key insight is that when an edge is added or removed, only a subset of nodes is affected and therefore requires core number updates \cite{li2013efficient}.
\item {CENDY:} It is an efficient method for dynamically updating Closeness centrality and avErage path leNgth in Dynamic Networks undergoing edge insertions or deletions \cite{yen2013efficient}.
\item {QUBE:} It is a technique that narrows the search space by identifying a set of candidate vertices whose betweenness centrality values are affected after edge addition or removal \cite{lee2012qube}.
\end{enumerate}
We have discussed these measures in detail in Section \ref{sec:relatedW}.
\subsubsection{Evaluation Metrics} To comprehensively evaluate the effectiveness of the proposed centrality measures, multiple complementary evaluation criteria are employed to assess their diffusion capability, approximation quality, structural importance, and computational efficiency.
\begin{enumerate}
\item{Area Under the Infection Curve (AUC):} For each diffusion model, we compute the AUC of the infected proportion over time. Let $L(t)$ denote the fraction of infected nodes at time $t$. The AUC is defined as: $AUC = \sum_{t=1}^{T} L(t)$. Higher AUC values indicate faster diffusion, larger outbreak size, and stronger global influence of selected seed nodes. Unlike visual curve comparison, we report AUC values in tabular form for all datasets and diffusion models to enable precise quantitative comparison.
\item{Kendall Rank Correlation ($\tau$):} To assess ranking consistency between SEC and ASEC, we compute Kendall’s $\tau$ correlation coefficient:
$\tau = \frac{C - D}{\binom{n}{2}}$, 
where $C$ = concordant pairs and $D$ = discordant pairs. High $\tau$ confirms that ASEC preserves the ranking quality of SEC.
\item{Robustness Analysis:} Since SEC quantifies the contribution of temporal nodes to the global spectral structure of the network, its ability to identify structurally critical nodes is further evaluated through robustness analysis. Nodes are removed sequentially according to the rankings produced by each centrality measure, and the resulting degradation of the network is assessed using two complementary metrics:
\begin{itemize}
    \item {Normalized spectral radius:} It measures the reduction in global spectral connectivity; and
    \item {Normalized Largest Connected Component (LCC):} It quantifies the fragmentation of the network after targeted node removal.
\end{itemize}
Centrality measures that produce a faster decline in these quantities are considered more effective at identifying structurally important nodes.
\item{Computational Efficiency:} Since SEC is spectral-based,  computational efficiency is an important performance criterion. The average execution time of SEC, ASEC, and the competing centrality measures is recorded over all temporal snapshots to evaluate their scalability for large temporal networks.

\end{enumerate}
\subsection{Empirical Validation of Theoretical Properties of SEC}
\label{sec:empericalvalidation}
To empirically validate the theoretical properties of SEC, synthetic networks discussed in Section \ref{sec:datasets} of sizes $N \in \{50, 100, 200\}$ were generated, and 20 independent realizations were considered per configuration, resulting in a total of $180$ network snapshots. For each snapshot, the spectral radius was computed before and after the removal of every node, and the corresponding $C_{SEC}$ values were recorded. To ensure numerical stability, all spectral computations were performed on the largest connected component using symmetric eigensolvers. 
\begin{table}[H]
\renewcommand{\arraystretch}{0.95} 
\setlength{\tabcolsep}{10pt} 
\centering
\caption{Minimum and maximum $C_{SEC}$ values observed across synthetic network models.}
\label{tab:sec_boundedness}
\begin{tabular}{lccc}
\hline
\textbf{Model} & \textbf{Nodes} & \textbf{Min $C_{SEC}$} & \textbf{Max $C_{SEC}$} \\
\hline
BA & 50  & 0.000475 & 0.285646 \\
BA & 100 & 0.000126 & 0.198202 \\
BA & 200 & 0.000018 & 0.199395 \\
ER & 50  & 0.000057 & 0.094842 \\
ER & 100 & 0.000041 & 0.043521 \\
ER & 200 & 0.000605 & 0.021012 \\
WS & 50  & 0.000235 & 0.064189 \\
WS & 100 & 0.000040 & 0.038360 \\
WS & 200 & 0.000018 & 0.029440 \\
\hline
\end{tabular}
\end{table}
Table \ref{tab:sec_boundedness} reports the minimum and maximum $C_{SEC}$ values observed across all experimental runs. Across more than 18,000 individual node removals, all observed $C_{SEC}$ values satisfied $0 \leq C_{SEC}(v) \leq 1$, with no boundedness violations detected. Barabási–Albert(BA) models have relatively higher Max $C_{SEC}$ values, which reflect the stronger influence of hub nodes in these models, i.e., removing a hub causes a large spectral drop. This indicates that SEC is largely insensitive to peripheral or weakly connected nodes, while selectively emphasizing structurally significant vertices that meaningfully affect the global spectral properties of the network. These results empirically confirm the boundedness and robustness of SEC and establish its numerical stability across diverse network topologies.\par
To directly assess the global sensitivity of SEC, we first compare $C_{SEC}$ magnitudes across different network models. Figure \ref{fig:distribution} illustrates the distribution of $C_{SEC}$ values on a logarithmic scale. The differences in $C_{SEC}$ distributions across network models arise from how these networks are built. In BA networks, a few hub nodes accumulate many connections and play a dominant role in maintaining the overall connectivity of the network. Removing such hubs leads to a huge reduction in the spectral radius, which explains the presence of large $C_{SEC}$ values and the long tail observed in the BA model. In ER and WS models, removing any single node has a relatively small effect on the global structure, leading to $C_{SEC}$ values that are concentrated near zero. These observations  empirically confirm that SEC selectively emphasizes structurally critical nodes rather than uniformly distributing importance across the network.
\begin{figure}[H]
    \centering
     \caption{Log-density distributions of SEC across ER, BA, and WS networks.}
    \includegraphics[width=0.48\linewidth]{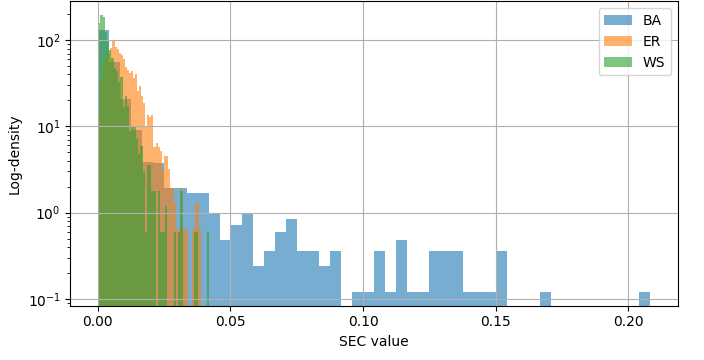}
   
    \label{fig:distribution}
\end{figure}


\subsection{{Runtime Comparison of ASEC with Baseline Centrality Measures}}
We compare the log-scale runtime comparison of SEC, its approximation ASEC, and representative baseline temporal centrality measures (TWC, TCC, tempoRank, CENDY, QUBE, temporal coreness, supracentrality and EffC) across network snapshots of real-world datasets discussed in Section \ref{sec:datasets}. In figure \ref{fig:runtime_plot_EE}, \ref{fig:runtime_plot_CM}, and \ref{fig:runtime_plot_MO}, each data point corresponds to a single snapshot, where the x-axis represents the snapshot index and the y-axis shows the execution time on a logarithmic scale. The logarithmic scale reveals pronounced differences in computational cost across methods. The runtime of SEC grows rapidly with increasing network size, reflecting the repeated eigenvalue computations required for each node removal. In contrast, ASEC remains in the millisecond range across all snapshots, making it highly scalable for real-time and large-scale temporal networks.\par
Efficiency Centrality (EffC), while effective on smaller datasets, exhibited severe computational scalability limitations on the MathOverflow dataset. The execution time increased dramatically with network size and temporal complexity, and the method did not terminate within practical runtime constraints. Therefore, EffC was excluded from the runtime comparison in Figure \ref{fig:runtime_plot_MO}. This observation highlights the limited scalability of EffC for large-scale temporal networks. In contrast, SEC and ASEC maintained stable runtimes across all datasets, demonstrating superior scalability and practical applicability for large temporal networks. 
\begin{figure}[H]
\centering
\caption{Log-scale Runtime Comparison of Centrality Measures across Real-World Networks. Efficiency Centrality (EffC) was omitted from the MathOverflow runtime curve because its execution exceeded practical computational time limits on the dataset.}

\begin{subfigure}[t]{0.32\linewidth}
    \centering
    \includegraphics[width=\linewidth]{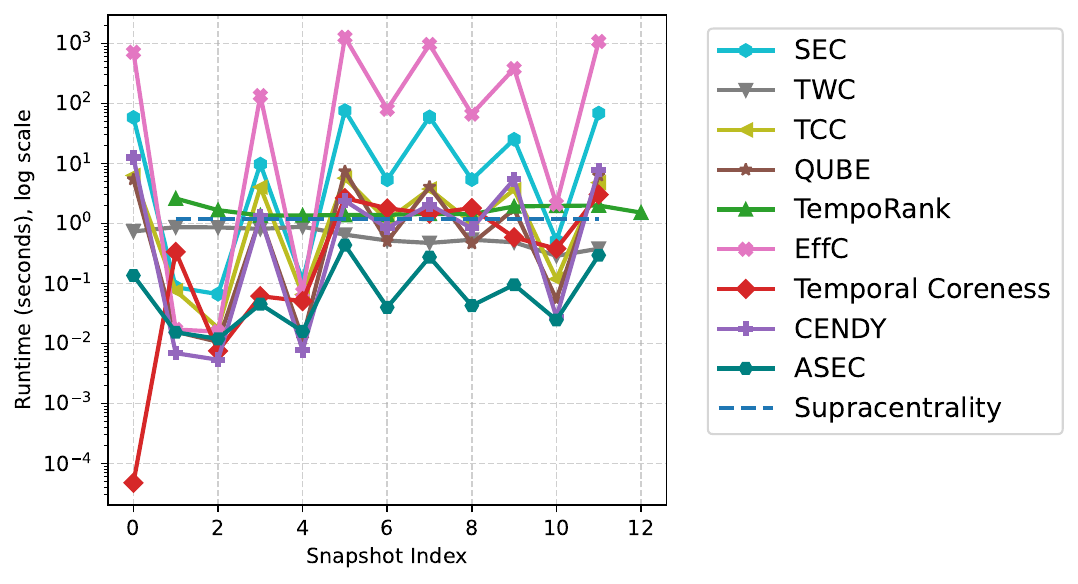}
    \caption{Email Enron Network}
    \label{fig:runtime_plot_EE}
\end{subfigure}
\hfill
\begin{subfigure}[t]{0.32\linewidth}
    \centering
    \includegraphics[width=\linewidth]{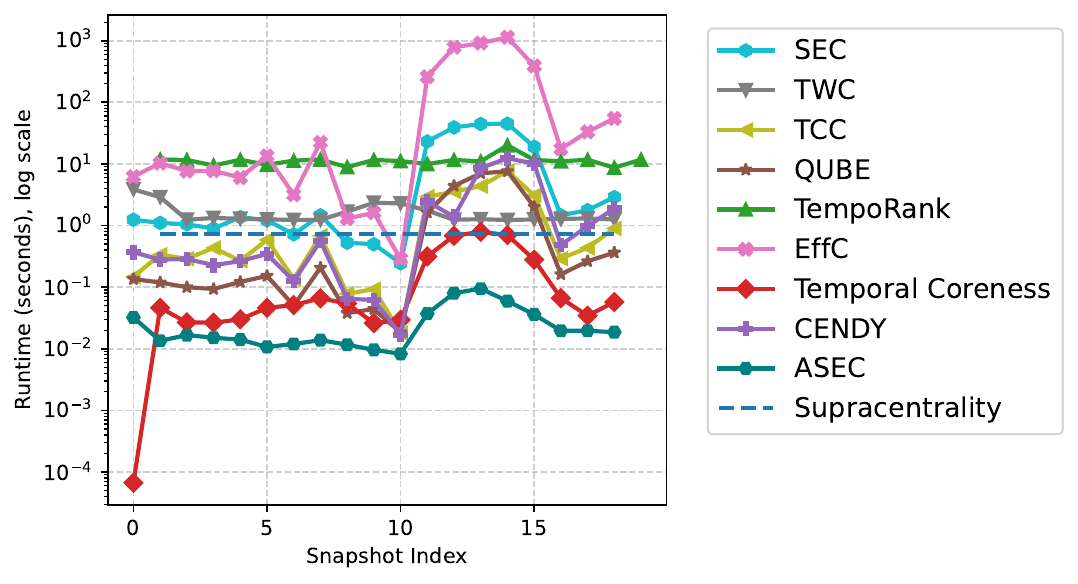}
    \caption{CollegeMsg Network}
    \label{fig:runtime_plot_CM}
\end{subfigure}
\hfill
\begin{subfigure}[t]{0.32\linewidth}
    \centering
    \includegraphics[width=\linewidth]{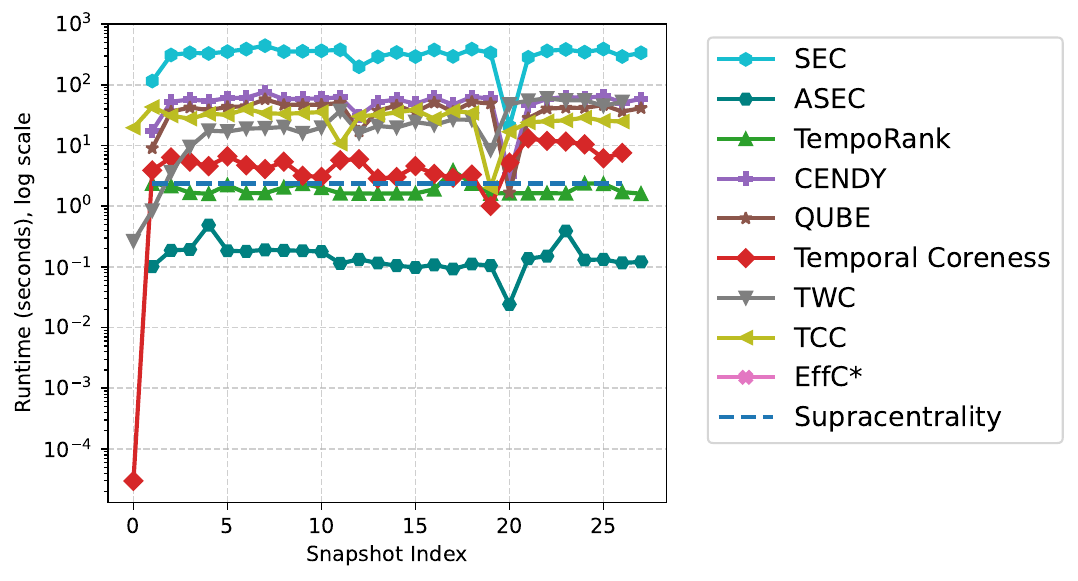}
    \caption{Math Overflow Network}
    \label{fig:runtime_plot_MO}
\end{subfigure}

\label{fig:SEC_vs_ASEC_vs_BCM_runtime}
\end{figure}

\subsection{Empirical Accuracy Analysis of ASEC}
\label{subsec:Emeracc}
To evaluate the practical validity of the proposed ASEC, we empirically compare it with SEC using all three real-world temporal datasets. We evaluate the agreement in centrality values, boundedness of absolute error, and the preservation of node ranking of both methods. These experiments demonstrate that the Perron–Frobenius–based approximation (ASEC) closely reproduces the behavior of SEC while reducing the runtime by orders of magnitude. They further validate the theoretical analysis presented in Section \ref{subsec:ASEC}. Specifically, they evaluate whether the neglected higher-order perturbation terms remain sufficiently small in practice, thereby confirming the accuracy of the first-order approximation underlying ASEC.
\subsubsection{Agreement of ASEC with exact SEC:}
ASEC was computed using a single Perron–Frobenius eigenvector per snapshot, in contrast to SEC which requires repeated spectral radius recomputation for each node. Figure \ref{fig:SEC_vs_ASEC_scatterplot} compares the relationship between $C_{SEC}$ and $C_{ASEC}$ values across different networks. Each point corresponds to a node and for each node, we explicitly compute the $C_{SEC}$ and $C_{ASEC}$ values, with the dashed diagonal indicating perfect agreement. The strong alignment of points along the diagonal demonstrates that the ASEC closely reproduces SEC values. Notably, nodes with low $C_{SEC}$ values are particularly well approximated, while small deviations occur primarily among high-impact nodes. Importantly, despite these deviations, ASEC preserves the relative importance of nodes while substantially reducing computational cost.
\begin{figure}[H]
    \centering
    \caption{SEC vs ASEC : Node-Level Agreement.}
     
    \begin{subfigure}[t]{0.32\linewidth}
    \centering
    \includegraphics[width=\linewidth]{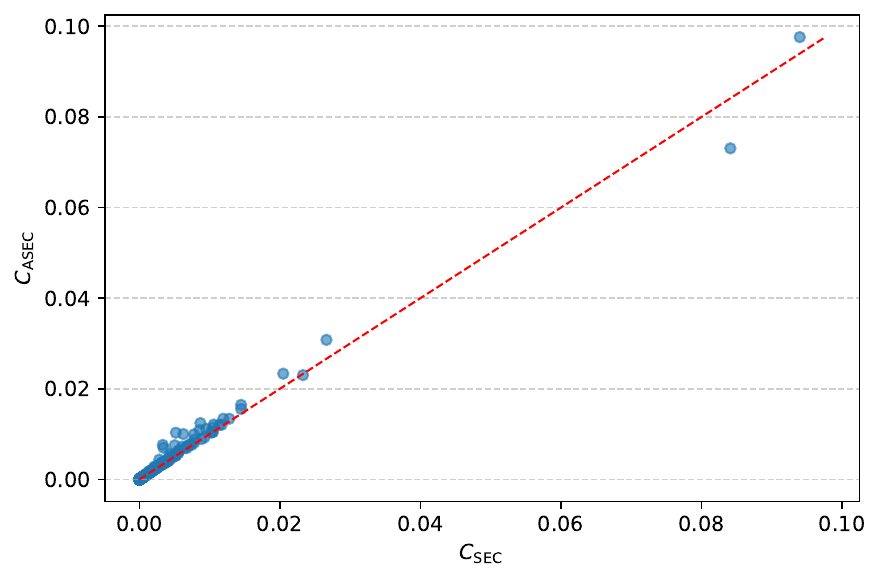}
    \caption{Email Enron Network}
    \label{fig:scatterplotEE}
    \end{subfigure}
    \hfill
    \begin{subfigure}[t]{0.32\linewidth}
    \centering
\includegraphics[width=\linewidth]{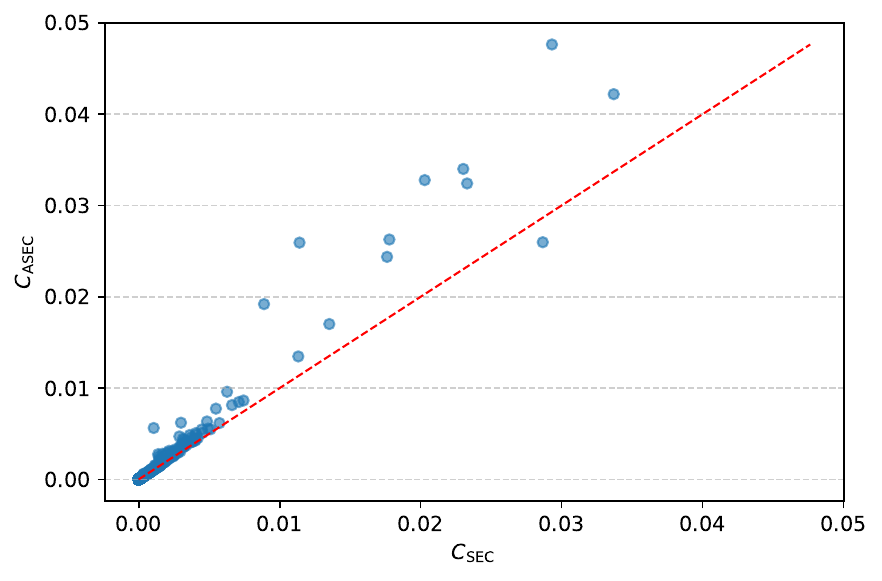}
    \caption{CollegeMsg Network}
    \label{fig:scatterplotCM}
    \end{subfigure}
    \hfill
     \begin{subfigure}[t]{0.32\linewidth}
    \centering
    \includegraphics[width=\linewidth]{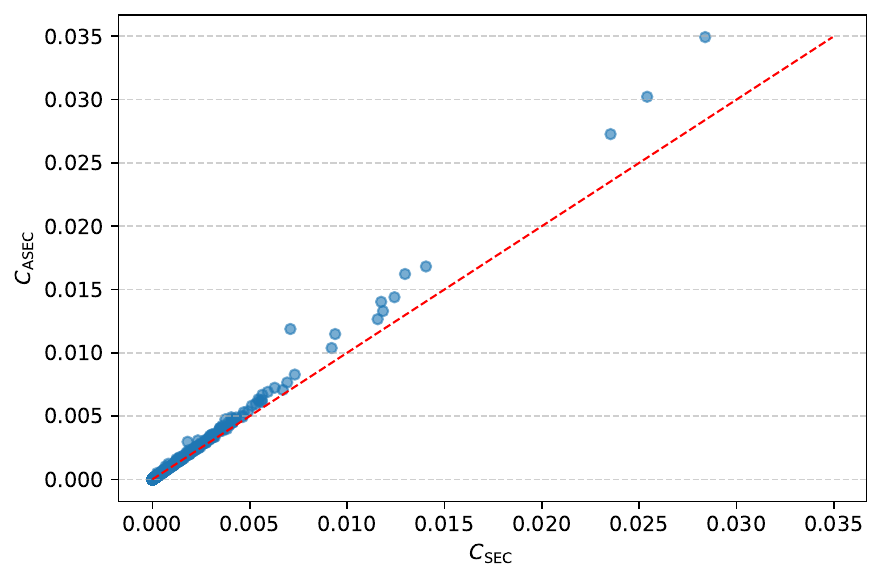}
    \caption{Math Overflow Network}
    \label{fig:scatterplotMO}
    \end{subfigure}
    \hfill

\label{fig:SEC_vs_ASEC_scatterplot}  
\end{figure}
 
\subsubsection{Boundedness of Approximation error:} 
To evaluate the correctness of the proposed approximation(ASEC), we compute its absolute deviation from SEC for every node in the network: $|\mathrm{C_{SEC}}(v) - \mathrm{C_{ASEC}}(v)|$. This error is computed using the aggregated $C_{SEC}$ and $C_{ASEC}$ values across all snapshots of the temporal network. Figure \ref{fig:SEC_vs_ASEC_abserror} reports the absolute approximation error between SEC and ASEC across three real-world networks. \par
As can be observed, the absolute error remains close to zero for the majority of nodes, with only a small number of localized deviations.. Furthermore, the maximum observed error remains below approximately $1.8 * 10^{-2}$
 across all three datasets, indicating that the approximation error is consistently small despite substantial differences in network size and structure. Since Proposition 1 shows that the approximation error is equal to the normalized higher-order remainder term ($\frac{R_v}{\lambda(A)}$), the consistently small errors observed across all three datasets indicate that the contribution of higher-order terms is negligible in practice. Although slightly larger deviations are observed for few nodes, the approximation remains accurate and preserves the overall ranking of influential nodes.
\begin{figure}[H]
    \centering
     \caption{Node-wise Absolute Approximation Error between SEC and ASEC across three Real-World datasets.}
    
     \begin{subfigure}[t]{0.32\linewidth}
    \centering
    \includegraphics[width=\linewidth]{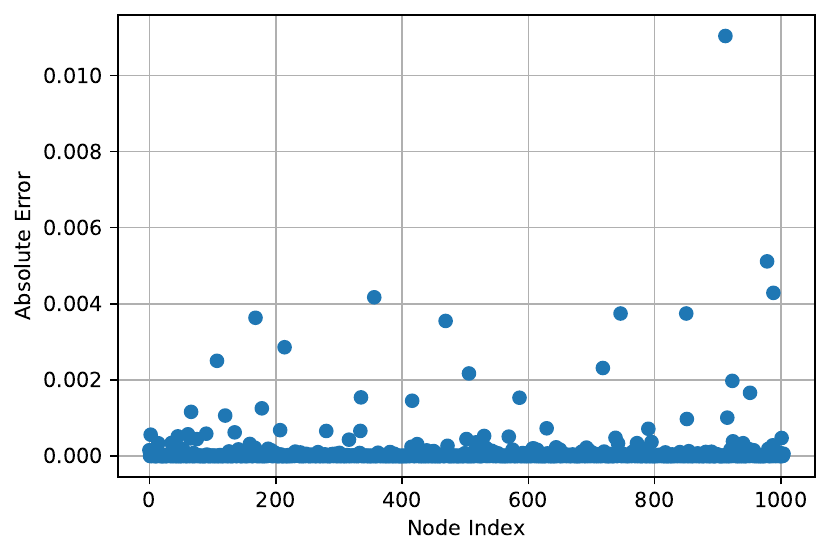}
    \caption{Email Enron Network}
    \label{fig:abserrorEE}
    \end{subfigure}
    \hfill
    \begin{subfigure}[t]{0.32\linewidth}
    \centering
    \includegraphics[width=\linewidth]{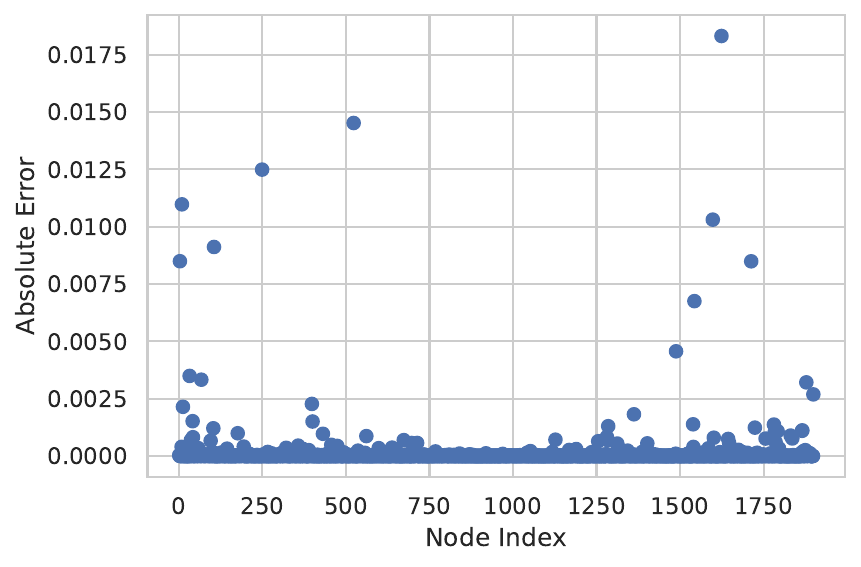}
    \caption{CollegeMsg Network}
    \label{fig:abserrorCM}
    \end{subfigure}
    \hfill
    \begin{subfigure}[t]{0.32\linewidth}
    \centering
    \includegraphics[width=\linewidth]{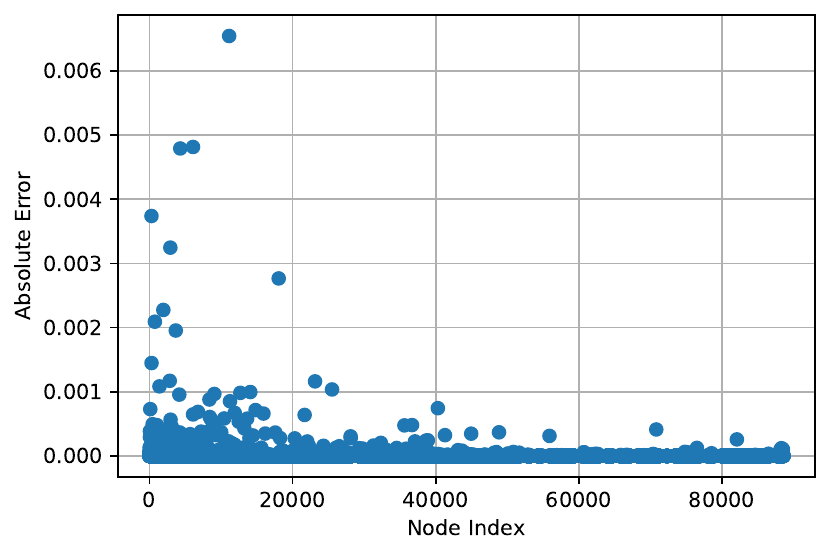}
    \caption{Math Overflow Network}
    \label{fig:abserrorMO}
    \end{subfigure}
 
\label{fig:SEC_vs_ASEC_abserror}  
\end{figure}
\subsubsection{Ranking Preservation Analysis:} Centrality measures are primarily used for identifying and ranking influential nodes, therefore it is crucial that the approximation maintains both the top influential set and the overall ordering of nodes. This subsection evaluates how well the approximate formulation (ASEC) preserves the ranking structure of the exact SEC.
\begin{enumerate}
\item {Top-k Overlap:} We further evaluate the accuracy of ASEC using a Top-$k$ overlap experiment, which measures agreement among the most influential nodes. For a given $k$, we find the number of nodes that appear in the top-$k$ rankings produced by both SEC and ASEC. Table \ref{tab:topk_overlapEE} and \ref{tab:topk_overlapMO}  depicts that ASEC consistently identifies the same highly influential nodes as SEC for $k=5$. Similarily, in Table \ref{tab:topk_overlapCM}, ASEC achieves over $90\%$ overlap for larger values of $k$ in CollegeMsg dataset. These results indicate that the approximation reliably preserves the most influential nodes while significantly reducing computational complexity.
\begin{table}[H]
\caption{Top-$k$ overlap between SEC and ASEC across different datasets.}
\centering

\begin{subtable}[t]{0.32\linewidth}
\centering
\begin{tabular}{cc}
\hline
\textbf{$k$} & \textbf{Overlap} \\
\hline
5  & 1.00 \\
10 & 0.90 \\
20 & 0.90 \\
50 & 0.96 \\
\hline
\end{tabular}
\caption{Email Enron Network}
\label{tab:topk_overlapEE}
\end{subtable}
\hfill
\begin{subtable}[t]{0.32\linewidth}
\centering
\begin{tabular}{cc}
\hline
\textbf{$k$} & \textbf{Overlap} \\
\hline
5  & 0.80 \\
10 & 0.90 \\
20 & 0.90 \\
50 & 0.96 \\
\hline
\end{tabular}
\caption{CollegeMsg Network}
\label{tab:topk_overlapCM}
\end{subtable}
\hfill
\begin{subtable}[t]{0.32\linewidth}
\centering
\begin{tabular}{cc}
\hline
\textbf{$k$} & \textbf{Overlap} \\
\hline
5  & 1.00 \\
10 & 0.90 \\
20 & 0.95 \\
50 & 0.96 \\
\hline
\end{tabular}
\caption{Math Overflow Network}
\label{tab:topk_overlapMO}
\end{subtable}

\label{tab:topk_overlap}
\end{table}
\item {Kendall $\tau$ Correlation:}
To evaluate the extent to which the ASEC preserves the global ranking consistency across all nodes, we computed Kendall’s $\tau$ rank correlation coefficient between $C_{SEC}$ and $C_{ASEC}$ scores for all datasets. The correlation values are given in Table \ref{tab:kendallcorrelation}. 
\begin{table}[H]
\centering
\caption{ Kendall $\tau$ Correlation between SEC and ASEC across various datasets}
\label{tab:kendallcorrelation}
\begin{tabular}{lc}
\hline
\textbf{Dataset} & \textbf{$\tau$ } \\
\hline

Enron Email & 0.9964 \\
CollegeMsg & 0.9931 \\
Math Overflow & 0.9950 \\
\hline
\end{tabular}
\end{table}
\end{enumerate}
These values provide complementary evidence supporting the validity of ASEC across diverse network structures. Such consistently high correlation confirms that ASEC preserves the relative ordering of nodes with negligible distortion, confirming its reliability as a computationally efficient alternative to the exact spectral method.\par

\subsection{Performance Analysis}

This section evaluates the spreading capability of SEC and its approximation ASEC in comparison with the baseline centrality measures under the SI, SIS, and IC diffusion models. To examine the consistency of the proposed methods under different diffusion settings, a parameter sensitivity analysis was performed by varying the infection probability ($\beta={0.0001,0.002,0.05,0.1,0.2,0.3}$) for the SI and SIS models, the recovery probability ($\gamma={0.02,0.04,0.1,0.2,0.3,0.4}$) for the SIS model, and the activation probability ($p={0.01,0.02,0.03,0.04,0.05,0.06}$) for the IC model. For each parameter setting, 10000 Monte Carlo simulations were performed, and the Area Under the Infection Curve (AUC) was computed to quantify the cumulative spreading efficiency. The average AUC values across all parameter settings are reported in Table \ref{tab:combined_auc}.
\begin{table}[H]
\centering
\caption{Cumulative spreading efficiency (AUC) of different centrality measures under various diffusion models for three dynamic datasets: Enron Email, CollegeMsg and MathOverflow. Higher values indicate stronger spreading capability. ""–"” denotes unavailable results for EffC on the MathOverflow dataset, as the method exceeded practical computational time limits during execution.}
\label{tab:combined_auc}
\resizebox{\textwidth}{!}{
\begin{tabular}{l|ccc|ccc|ccc}
\hline
&
\multicolumn{3}{c|}{\textbf{Enron Email}}
&
\multicolumn{3}{c|}{\textbf{CollegeMsg}}
&
\multicolumn{3}{c}{\textbf{MathOverflow}}\\

\cline{2-10}

\textbf{Centrality}
&
\textbf{SI}
&
\textbf{SIS}
&
\textbf{IC}
&
\textbf{SI}
&
\textbf{SIS}
&
\textbf{IC}
&
\textbf{SI}
&
\textbf{SIS}
&
\textbf{IC}
\\
\hline

SEC
& \textbf{0.0724} & \textbf{0.0326} & \textbf{0.0671}
&0.0991  & \textbf{0.0090} & \textbf{5.5399}
& \textbf{0.0268} & \textbf{0.0050} & \textbf{0.0285} 
\\

ASEC
& \textbf{0.0723} & \textbf{0.0325} & \textbf{0.0671}
& 0.0988 & \textbf{0.0090} & 5.4624
& \textbf{0.0268} & \textbf{0.0049}  & \textbf{0.0285} \\

TWC \cite{oettershagen2022temporal}
& 0.0704 & 0.0318 & 0.0660
& 0.1008  & 0.0087 & 5.5351
& 0.0266  & 0.0049 & 0.0284 \\

EffC \cite{wang2017new}
& 0.0700 & 0.0323 & 0.0662
& 0.0970  & 0.0087 & 5.1154
& - & - & - \\

Supracentrality  \cite{taylor2017eigenvector}
& 0.0558  & 0.0280 & 0.0558
& 0.0948 & 0.0086 & 0.0997
& 0.0157  & 0.0043 & 0.0157 \\

TCC \cite{takaguchi2016coverage}
& 0.0655 & 0.0306 & 0.0626
& 0.0960  & 0.0085 & 5.3805
& 0.0204  & 0.0046 & 0.0210 \\

TempoRank\cite{rocha2014random}
& 0.0705 & 0.0316 & 0.0661
& 0.0958  & 0.0087 & 4.9376
& 0.0221  & 0.0046 & 0.0230 \\

Temporal Coreness \cite{li2013efficient}
& 0.0692  & 0.0313 & 0.0653
& 0.0968 & 0.0088 & 5.2248
& 0.0206 & 0.0045 & 0.0215 \\

CENDY \cite{yen2013efficient} 
& 0.0711 & 0.0318 & 0.0662
& 0.0963 & 0.0088 & 4.1824
& 0.0172  & 0.0044 & 0.0173 \\

QUBE \cite{lee2012qube}
& 0.0671 & 0.0305 & 0.0637
& 0.0970  & 0.0089 & 5.3247
& 0.0245  & 0.0049 & 0.0258 \\

\hline
\end{tabular}}
\end{table}
 
Overall, the proposed SEC consistently demonstrates superior spreading performance across the Enron Email and MathOverflow datasets, achieving the highest cumulative AUC under all three diffusion models. Its approximation, ASEC, closely reproduces the performance of SEC in every experiment, confirming that the proposed first-order perturbation approximation effectively preserves the diffusion capability of the exact method while significantly reducing computational complexity. On the CollegeMsg dataset, TWC achieves a marginally higher AUC than SEC under the SI model; however, the performance difference is negligible, and SEC achieves the highest AUC under both the SIS and IC models. Considering the substantially higher computational cost of TWC, the proposed ASEC provides a more favorable balance between spreading effectiveness and computational efficiency. Furthermore, unlike Efficiency Centrality (EffC), which could not be executed on the large-scale MathOverflow dataset due to its excessive computational requirements, both SEC and ASEC successfully scale to large temporal networks while maintaining consistently high diffusion performance. Since several competing methods produce relatively close AUC values, particularly on the CollegeMsg dataset, a statistical significance analysis is presented in the following subsection to determine whether the observed performance differences are statistically meaningful.

\subsection{Statistical Significance Analysis}
To determine whether the observed differences in diffusion performance are statistically meaningful, paired t-tests were performed between ASEC and each competing centrality measure using the snapshot-wise Mean Fraction Infected (MFI) values obtained across temporal snapshots.  The null hypothesis assumes that there is no significant difference between the snapshot-wise MFI values of ASEC and each competing centrality measure. A significance level of $p$=0.05 was adopted throughout the analysis. Since ASEC closely approximates SEC while requiring substantially lower computational cost, it was selected as the reference method for the statistical comparison.
\begin{table}[H]
\centering
\caption{Paired t-test results based on snapshot-wise Mean Fraction Infected (MFI) values. Entries report $p$-values; bold values indicate statistically significant differences ($p$<0.05). ""–"” denotes unavailable results for EffC on the MathOverflow dataset, as the method exceeded practical computational time limits during execution.  }
\label{tab:ttest_results}
\resizebox{\textwidth}{!}{
\begin{tabular}{l|ccc|ccc|ccc}
\hline
&
\multicolumn{3}{c|}{\textbf{Enron Email}}
&
\multicolumn{3}{c|}{\textbf{CollegeMsg}}
&
\multicolumn{3}{c}{\textbf{MathOverflow}}\\

\cline{2-10}

\textbf{Comparison}
&
\textbf{SI}
&
\textbf{SIS}
&
\textbf{IC}
&
\textbf{SI}
&
\textbf{SIS}
&
\textbf{IC}
&
\textbf{SI}
&
\textbf{SIS}
&
\textbf{IC}
\\
\hline

ASEC vs SEC
&
0.5675
&
0.7858
&
0.4986
&
\textbf{0.0244}
&
0.4433
&
\textbf{<0.0001}
&
0.9208
&
\textbf{<0.0001}
&
0.9538
\\

ASEC vs TWC
&
\textbf{0.0201}
&
\textbf{0.0422}
&
\textbf{0.0314}
&
\textbf{0.0084}
&
\textbf{0.0099}
&
\textbf{0.0002}
&
\textbf{0.0361}
&
\textbf{0.0492}
&
\textbf{0.0326}
\\

ASEC vs EffC
&
\textbf{0.0074}
&
\textbf{0.0238}
&
\textbf{0.0242}
&
\textbf{<0.0001}
&
\textbf{0.0026}
&
\textbf{<0.0001}
&
--
&
--
&
--
\\

ASEC vs Supra-centrality
&
\textbf{0.0010}
&
\textbf{0.0005}
&
\textbf{0.0009}
&
\textbf{<0.0001}
&
\textbf{0.0042}
&
\textbf{<0.0001}
&
\textbf{<0.0001}
&
\textbf{<0.0001}
&
\textbf{<0.0001}
\\

ASEC vs TCC
&
\textbf{0.0011}
&
\textbf{0.0026}
&
\textbf{0.0008}
&
\textbf{<0.0001}
&
\textbf{0.0012}
&
\textbf{0.0004}
&
\textbf{<0.0001}
&
\textbf{<0.0001}
&
\textbf{<0.0001}
\\

ASEC vs TempoRank
&
\textbf{0.0016}
&
\textbf{0.0140}
&
\textbf{0.0043}
&
\textbf{<0.0001}
&
\textbf{0.0016}
&
\textbf{<0.0001}
&
\textbf{<0.0001}
&
\textbf{<0.0001}
&
\textbf{<0.0001}
\\

ASEC vs Temporal Coreness
&
\textbf{0.0032}
&
\textbf{0.0024}
&
\textbf{0.0037}
&
\textbf{<0.0001}
&
\textbf{0.0062}
&
\textbf{<0.0001}
&
\textbf{<0.0001}
&
\textbf{<0.0001}
&
\textbf{<0.0001}
\\

ASEC vs CENDY
&
\textbf{0.0227}
&
\textbf{0.0470}
&
\textbf{0.0201}
&
\textbf{<0.0001}
&
\textbf{0.0204}
&
\textbf{<0.0001}
&
\textbf{<0.0001}
&
\textbf{<0.0001}
&
\textbf{<0.0001}
\\

ASEC vs QUBE
&
\textbf{0.0017}
&
\textbf{0.0020}
&
\textbf{0.0018}
&
\textbf{<0.0001}
&
\textbf{0.0171}
&
\textbf{<0.0001}
&
\textbf{<0.0001}
&
\textbf{<0.0001}
&
\textbf{<0.0001}
\\

\hline
\end{tabular}}
\end{table}
The statistical results across the Enron Email, CollegeMsg, and MathOverflow datasets strongly support the diffusion analysis presented in the previous subsection. In the vast majority of comparisons, ASEC exhibits statistically significant improvements over the competing centrality measures (p<0.05), confirming that the observed differences in diffusion performance are unlikely to be due to random variation. Furthermore, although the comparison between SEC and ASEC is statistically significant in a small number of cases, the corresponding differences in cumulative spreading efficiency remain negligible, indicating that ASEC closely reproduces the diffusion behaviour of SEC while offering substantially lower computational cost. Overall, these results provide strong statistical evidence that the proposed spectral efficiency framework consistently achieves superior or highly competitive diffusion performance across diverse temporal networks.
\subsection{Robustness Analysis} To further evaluate the structural effectiveness of the proposed centrality measures, we performed a robustness analysis based on targeted node removal. Unlike the diffusion experiments, which assess spreading capability, this experiment evaluates the ability of each centrality measure to identify structurally critical nodes whose removal most effectively disrupts the network. For each centrality measure, the top-10 nodes obtained from the global temporal ranking were identified. These nodes were then sequentially removed from every temporal snapshot, and the normalized spectral radius and normalized Largest Connected Component (LCC) of the remaining graph were computed. The robustness curves in Figure \ref{fig:Spectralradius} and \ref{fig:LargestCC} correspond to the average of these structural measures over all temporal snapshots.
\begin{figure}[H]
    \centering
     \caption{Average normalized spectral radius under sequential node removal (lower values indicate greater structural degradation). Efficiency Centrality (EffC) was omitted from the MathOverflow Network curve because its execution exceeded practical computational time limits on the dataset.}
    
     \begin{subfigure}[t]{0.32\linewidth}
    \centering
    \includegraphics[width=0.95\linewidth]{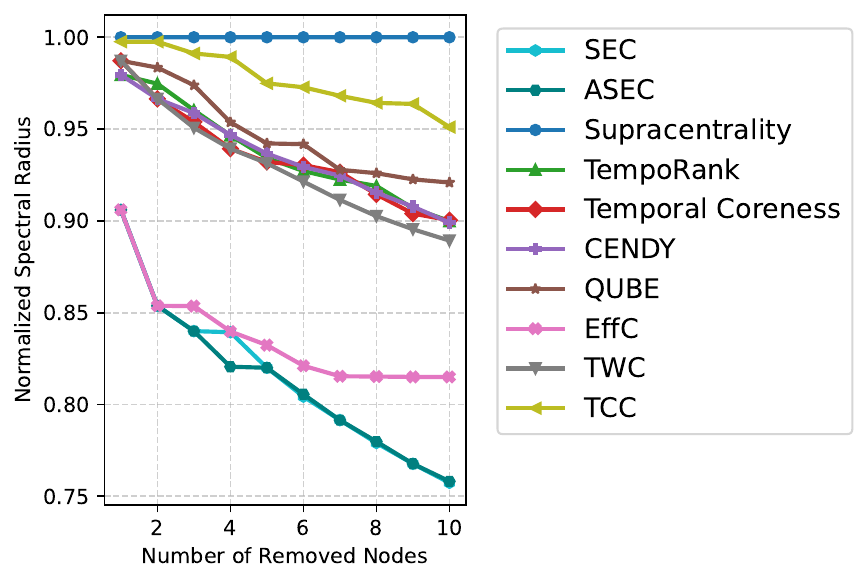}
    \caption{Email Enron Network}
    \label{fig:tempSREE}
    \end{subfigure}
    \hfill
    \begin{subfigure}[t]{0.32\linewidth}
    \centering
    \includegraphics[width=0.95\linewidth]{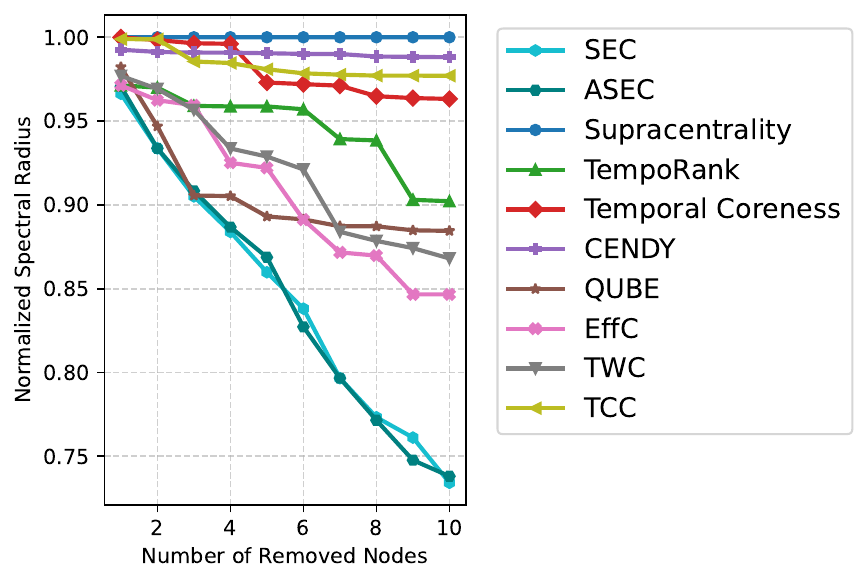}
    \caption{CollegeMsg Network}
    \label{fig:tempSRCM}
    \end{subfigure}
    \hfill
    \begin{subfigure}[t]{0.32\linewidth}
    \centering
    \includegraphics[width=0.95\linewidth]{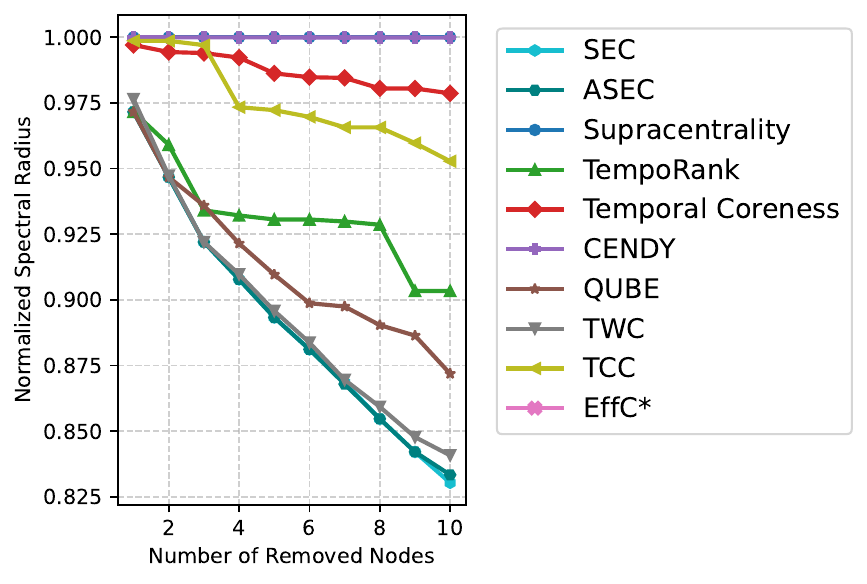}
    \caption{Math Overflow Network}
    \label{fig:tempSRMO}
    \end{subfigure}
 
\label{fig:Spectralradius}  
\end{figure}
\begin{figure}[H]
    \centering
     \caption{Average normalized largest connected component (LCC) under sequential node removal (lower values indicate greater network fragmentation). Efficiency Centrality (EffC) was omitted from the MathOverflow Network curve because its execution exceeded practical computational time limits on the dataset.}
    
     \begin{subfigure}[t]{0.32\linewidth}
    \centering
    \includegraphics[width=0.95\linewidth]{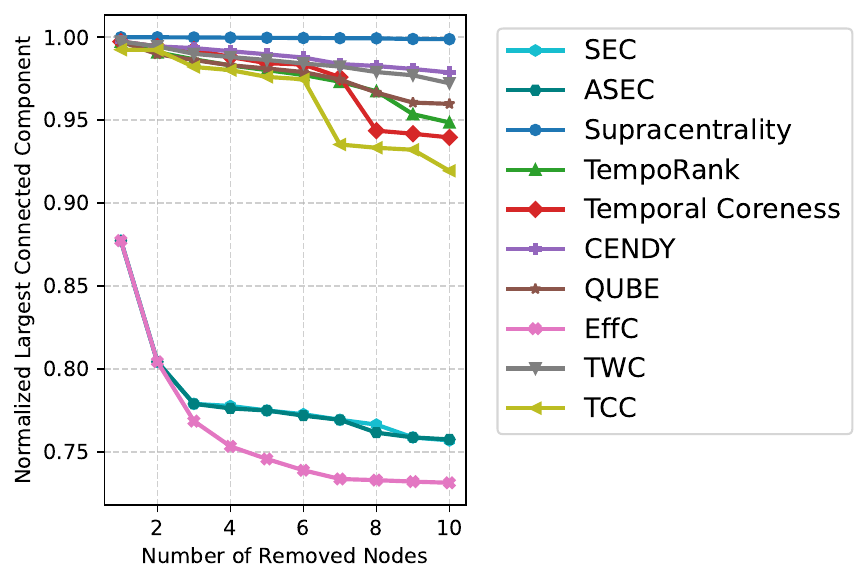}
    \caption{Email Enron Network}
    \label{fig:tempLCCEE}
    \end{subfigure}
    \hfill
    \begin{subfigure}[t]{0.32\linewidth}
    \centering
    \includegraphics[width=0.95\linewidth]{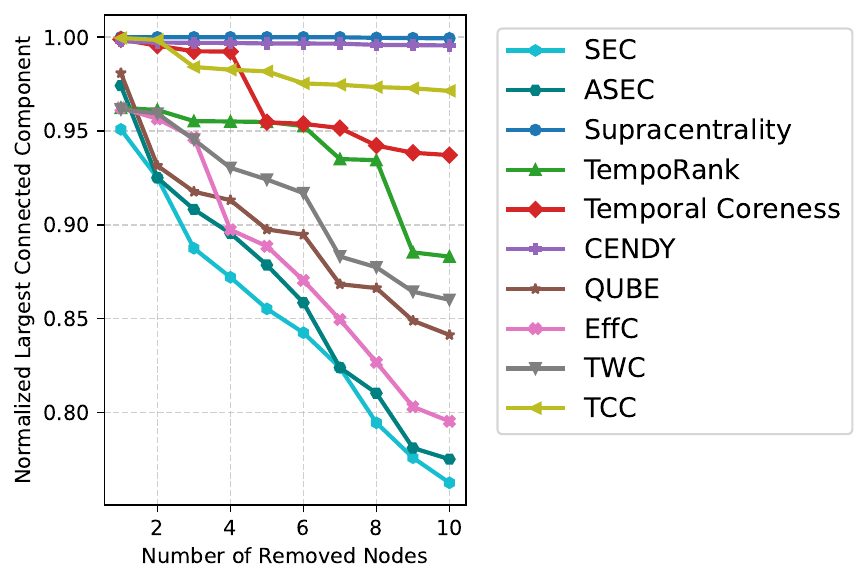}
    \caption{CollegeMsg Network}
    \label{fig:tempLCCCM}
    \end{subfigure}
    \hfill
    \begin{subfigure}[t]{0.32\linewidth}
    \centering
    \includegraphics[width=0.95\linewidth]{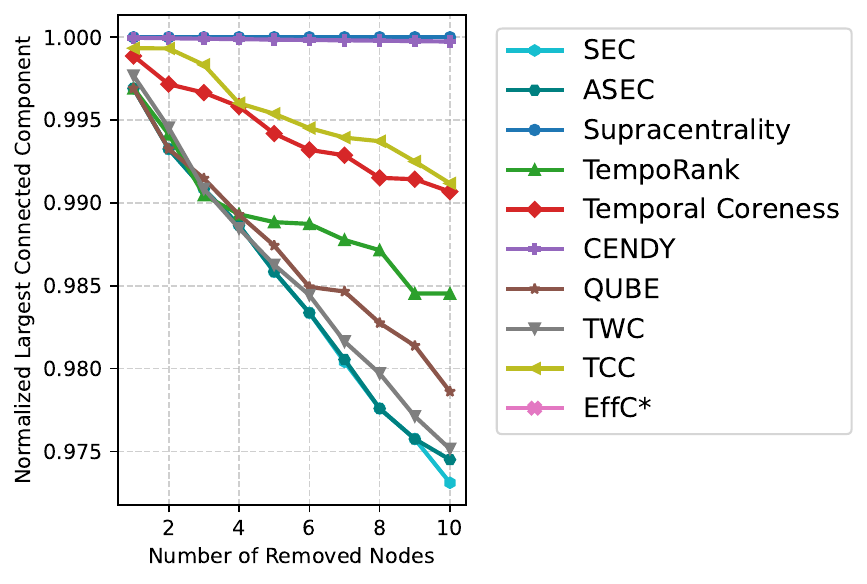}
    \caption{Math Overflow Network}
    \label{fig:tempLCCMO}
    \end{subfigure}
 
\label{fig:LargestCC}  
\end{figure}
Across all three datasets, SEC consistently exhibits the fastest decline in both structural metrics, indicating that it identifies the most structurally critical nodes whose removal causes the greatest degradation of the temporal network. ASEC closely follows the performance of SEC throughout the node removal process, demonstrating that the proposed approximation effectively preserves the robustness characteristics of the exact spectral method while substantially reducing computational cost. Although TWC achieves competitive diffusion performance, particularly on the CollegeMsg dataset, it consistently results in slower structural degradation than SEC and ASEC, suggesting that the proposed methods identify nodes that are not only effective spreaders but also more essential for maintaining network connectivity. The remaining baseline methods, including TempoRank, Temporal Coreness, CENDY, QUBE, TCC, and Supracentrality, produce comparatively smaller reductions in spectral radius and LCC, reflecting their relatively weaker ability to identify structurally influential nodes. Moreover, unlike EffC, which could not be evaluated on the large-scale MathOverflow dataset because of its high computational cost, both SEC and ASEC remain computationally feasible while maintaining superior robustness performance. Overall, these results complement the diffusion experiments by demonstrating that the proposed spectral efficiency framework captures globally influential nodes from both diffusion and structural perspectives, thereby providing a more comprehensive assessment of node importance in temporal networks.

\section{Limitations and Future Work}
SEC and ASEC demonstrate promising results in identifying structurally critical nodes in temporal networks; however, several limitations remain that also motivate directions for future
research. Addressing these aspects offers significant potential to further refine the method and expand its applicability in both theoretical and practical settings.

\begin{enumerate}
 \item{Weighted and Directed Networks:} The proposed SEC framework is currently limited to undirected and unweighted temporal networks, as the theoretical formulation relies on symmetric adjacency matrices for spectral analysis and the derivation of ASEC. Although this assumption is appropriate for many contact and communication networks, interactions are directional or vary in strength across many real-world systems. Extending SEC to weighted and directed temporal networks is a natural next step and may provide richer structural insights in such settings.

\item{Online and Streaming Scenarios:}  
SEC is currently computed in an offline manner over fixed temporal snapshots. Developing incremental or streaming variants that update $C_{SEC}$ scores as new interactions arrive would enhance its applicability to continuously evolving networks and real-time analysis tasks.

\item{Enhancing Generalizability:} The current evaluation focuses on a limited set of temporal communication data. Broader validation across diverse domains, including financial, biological, and infrastructure networks, will further strengthen the generalizability and applicability of SEC.
\end{enumerate}
\section{Conclusion}

In this work, we proposed Spectral Efficiency Centrality (SEC), a temporal centrality measure that quantifies node importance through its impact on the network spectral radius. Unlike existing temporal centrality measures, SEC captures the global structural influence of nodes and provides a spectral perspective for identifying influential spreaders in temporal networks. To improve computational efficiency, we further developed ASEC, a first-order approximation that preserves the ranking quality of SEC while substantially reducing computational cost.\par

Extensive experiments on the Enron Email, CollegeMsg, and MathOverflow datasets demonstrate that SEC and ASEC consistently achieve competitive or superior diffusion performance while remaining computationally efficient. Statistical significance and robustness analyses further confirm that the proposed methods identify nodes that are not only effective spreaders but also structurally critical to temporal networks. Overall, the proposed framework offers a theoretically grounded, scalable, and effective approach for influential node identification in temporal networks.

\bibliographystyle{plain} 
\bibliography{ref}

\begin{thebibliography}{10}

\bibitem{albert2002statistical}
R{\'e}ka Albert and Albert-L{\'a}szl{\'o} Barab{\'a}si.
\newblock Statistical mechanics of complex networks.
\newblock {\em Reviews of modern physics}, 74(1):47, 2002.

\bibitem{anderson1991infectious}
Roy~M Anderson and Robert~M May.
\newblock {\em Infectious diseases of humans: dynamics and control}.
\newblock Oxford university press, 1991.

\bibitem{boccaletti2006complex}
Stefano Boccaletti, Vito Latora, Yamir Moreno, Martin Chavez, and D-U Hwang.
\newblock Complex networks: Structure and dynamics.
\newblock {\em Physics reports}, 424(4-5):175--308, 2006.

\bibitem{bonacich1972factoring}
Phillip Bonacich.
\newblock Factoring and weighting approaches to status scores and clique identification.
\newblock {\em Journal of mathematical sociology}, 2(1):113--120, 1972.

\bibitem{bonacich1987power}
Phillip Bonacich.
\newblock Power and centrality: A family of measures.
\newblock {\em American journal of sociology}, 92(5):1170--1182, 1987.

\bibitem{brede2012networks}
Markus Brede.
\newblock Networks—an introduction. mark ej newman.(2010, oxford university press.) \$65.38,{\pounds} 35.96 (hardcover), 772 pages. isbn-978-0-19-920665-0., 2012.

\bibitem{brodka2011degree}
Piotr Br{\'o}dka, Krzysztof Skibicki, Przemys{\l}aw Kazienko, and Katarzyna Musia{\l}.
\newblock A degree centrality in multi-layered social network.
\newblock In {\em 2011 international conference on computational aspects of social networks (CASoN)}, pages 237--242. IEEE, 2011.

\bibitem{buss2020algorithmic}
Sebastian Bu{\ss}, Hendrik Molter, Rolf Niedermeier, and Maciej Rymar.
\newblock Algorithmic aspects of temporal betweenness.
\newblock In {\em Proceedings of the 26th ACM SIGKDD international conference on knowledge discovery \& data mining}, pages 2084--2092, 2020.

\bibitem{callaghan2007random}
Thomas Callaghan, Peter~J Mucha, and Mason~A Porter.
\newblock Random walker ranking for ncaa division ia football.
\newblock {\em The American Mathematical Monthly}, 114(9):761--777, 2007.

\bibitem{chartier2011sensitivity}
Timothy~P Chartier, Erich Kreutzer, Amy~N Langville, and Kathryn~E Pedings.
\newblock Sensitivity and stability of ranking vectors.
\newblock {\em SIAM Journal on Scientific Computing}, 33(3):1077--1102, 2011.

\bibitem{crescenzi2020finding}
Pierluigi Crescenzi, Cl{\'e}mence Magnien, and Andrea Marino.
\newblock Finding top-k nodes for temporal closeness in large temporal graphs.
\newblock {\em Algorithms}, 13(9):211, 2020.

\bibitem{cruciani2024mantra}
Antonio Cruciani.
\newblock Mantra: Temporal betweenness centrality approximation through sampling.
\newblock In {\em Joint European Conference on Machine Learning and Knowledge Discovery in Databases}, pages 125--143. Springer, 2024.

\bibitem{daly2008social}
Elizabeth~M Daly and Mads Haahr.
\newblock Social network analysis for information flow in disconnected delay-tolerant manets.
\newblock {\em IEEE Transactions on Mobile Computing}, 8(5):606--621, 2008.

\bibitem{estrada2013communicability}
Ernesto Estrada.
\newblock Communicability in temporal networks.
\newblock {\em Physical Review E—Statistical, Nonlinear, and Soft Matter Physics}, 88(4):042811, 2013.

\bibitem{fortunato2010community}
Santo Fortunato.
\newblock Community detection in graphs.
\newblock {\em Physics reports}, 486(3-5):75--174, 2010.

\bibitem{freeman1978centrality}
Linton~C Freeman.
\newblock Centrality in social networks conceptual clarification.
\newblock {\em Social networks}, 1(3):215--239, 1978.

\bibitem{ghosh2011parameterized}
Rumi Ghosh and Kristina Lerman.
\newblock Parameterized centrality metric for network analysis.
\newblock {\em Physical Review E—Statistical, Nonlinear, and Soft Matter Physics}, 83(6):066118, 2011.

\bibitem{girvan2002community}
Michelle Girvan and Mark~EJ Newman.
\newblock Community structure in social and biological networks.
\newblock {\em Proceedings of the national academy of sciences}, 99(12):7821--7826, 2002.

\bibitem{golub2013matrix}
Gene~H Golub and Charles~F Van~Loan.
\newblock {\em Matrix computations}.
\newblock JHU press, 2013.

\bibitem{grindrod2012models}
Peter Grindrod and Desmond~J Higham.
\newblock Models for evolving networks: with applications in telecommunication and online activities.
\newblock {\em IMA Journal of Management Mathematics}, 23(1):1--15, 2012.

\bibitem{guimera2005worldwide}
Roger Guimera, Stefano Mossa, Adrian Turtschi, and LA~Nunes Amaral.
\newblock The worldwide air transportation network: Anomalous centrality, community structure, and cities' global roles.
\newblock {\em Proceedings of the National Academy of Sciences}, 102(22):7794--7799, 2005.

\bibitem{holme2003congestion}
Petter Holme.
\newblock Congestion and centrality in traffic flow on complex networks.
\newblock {\em Advances in Complex Systems}, 6(02):163--176, 2003.

\bibitem{holme2012temporal}
Petter Holme and Jari Saram{\"a}ki.
\newblock Temporal networks.
\newblock {\em Physics reports}, 519(3):97--125, 2012.

\bibitem{hu2010measuring}
Yanqing Hu, Yuchao Nie, Hua Yang, Jie Cheng, Ying Fan, and Zengru Di.
\newblock Measuring the significance of community structure in complex networks.
\newblock {\em Physical Review E—Statistical, Nonlinear, and Soft Matter Physics}, 82(6):066106, 2010.

\bibitem{kato1966perturbation}
Tosio Kato and Tosio Kat{\aa}o.
\newblock {\em Perturbation theory for linear operators}, volume 132.
\newblock Springer, 1966.

\bibitem{katz1953new}
Leo Katz.
\newblock A new status index derived from sociometric analysis.
\newblock {\em Psychometrika}, 18(1):39--43, 1953.

\bibitem{kempe2003maximizing}
David Kempe, Jon Kleinberg, and {\'E}va Tardos.
\newblock Maximizing the spread of influence through a social network.
\newblock In {\em Proceedings of the ninth ACM SIGKDD international conference on Knowledge discovery and data mining}, pages 137--146, 2003.

\bibitem{kim2012temporal}
Hyoungshick Kim and Ross Anderson.
\newblock Temporal node centrality in complex networks.
\newblock {\em Physical Review E—Statistical, Nonlinear, and Soft Matter Physics}, 85(2):026107, 2012.

\bibitem{kitsak2010identification}
Maksim Kitsak, Lazaros~K Gallos, Shlomo Havlin, Fredrik Liljeros, Lev Muchnik, H~Eugene Stanley, and Hern{\'a}n~A Makse.
\newblock Identification of influential spreaders in complex networks.
\newblock {\em Nature physics}, 6(11):888--893, 2010.

\bibitem{kostakos2009temporal}
Vassilis Kostakos.
\newblock Temporal graphs.
\newblock {\em Physica A: Statistical Mechanics and its Applications}, 388(6):1007--1023, 2009.

\bibitem{lancichinetti2010characterizing}
Andrea Lancichinetti, Mikko Kivel{\"a}, Jari Saram{\"a}ki, and Santo Fortunato.
\newblock Characterizing the community structure of complex networks.
\newblock {\em PloS one}, 5(8):e11976, 2010.

\bibitem{lanczos1950iteration}
Cornelius Lanczos.
\newblock An iteration method for the solution of the eigenvalue problem of linear differential and integral operators.
\newblock {\em Journal of research of the National Bureau of Standards}, 45(4):255--282, 1950.

\bibitem{latora2001efficient}
Vito Latora and Massimo Marchiori.
\newblock Efficient behavior of small-world networks.
\newblock {\em Physical review letters}, 87(19):198701, 2001.

\bibitem{lee2012qube}
Min-Joong Lee, Jungmin Lee, Jaimie~Yejean Park, Ryan~Hyun Choi, and Chin-Wan Chung.
\newblock {QUBE}: a quick algorithm for updating betweenness centrality.
\newblock In {\em Proceedings of the 21st international conference on World Wide Web}, pages 351--360, 2012.

\bibitem{lerman2010centrality}
Kristina Lerman, Rumi Ghosh, and Jeon~Hyung Kang.
\newblock Centrality metric for dynamic networks.
\newblock In {\em Proceedings of the Eighth Workshop on Mining and Learning with Graphs}, pages 70--77, 2010.

\bibitem{li2013efficient}
Rong-Hua Li, Jeffrey~Xu Yu, and Rui Mao.
\newblock Efficient core maintenance in large dynamic graphs.
\newblock {\em IEEE transactions on knowledge and data engineering}, 26(10):2453--2465, 2013.

\bibitem{lloyd2009epidemic}
James~O Lloyd-Smith, Dylan George, Kim~M Pepin, Virginia~E Pitzer, Juliet~RC Pulliam, Andrew~P Dobson, Peter~J Hudson, and Bryan~T Grenfell.
\newblock Epidemic dynamics at the human-animal interface.
\newblock {\em science}, 326(5958):1362--1367, 2009.

\bibitem{lu2016vital}
Linyuan L{\"u}, Duanbing Chen, Xiao-Long Ren, Qian-Ming Zhang, Yi-Cheng Zhang, and Tao Zhou.
\newblock Vital nodes identification in complex networks.
\newblock {\em Physics reports}, 650:1--63, 2016.

\bibitem{meyer2023matrix}
Carl~D Meyer.
\newblock {\em Matrix analysis and applied linear algebra}.
\newblock SIAM, 2023.

\bibitem{milanese2010approximating}
Attilio Milanese, Jie Sun, and Takashi Nishikawa.
\newblock Approximating spectral impact of structural perturbations in large networks.
\newblock {\em Physical Review E—Statistical, Nonlinear, and Soft Matter Physics}, 81(4):046112, 2010.

\bibitem{naima2025temporal}
Mehdi Naima.
\newblock Temporal betweenness centrality on shortest walks variants.
\newblock {\em Applied Network Science}, 10(1):11, 2025.

\bibitem{newman2003structure}
Mark~EJ Newman.
\newblock The structure and function of complex networks.
\newblock {\em SIAM review}, 45(2):167--256, 2003.

\bibitem{nicosia2013graph}
Vincenzo Nicosia, John Tang, Cecilia Mascolo, Mirco Musolesi, Giovanni Russo, and Vito Latora.
\newblock Graph metrics for temporal networks.
\newblock In {\em Temporal networks}, pages 15--40. Springer, 2013.

\bibitem{oettershagen2022computing}
Lutz Oettershagen and Petra Mutzel.
\newblock Computing top-k temporal closeness in temporal networks.
\newblock {\em Knowledge and Information Systems}, 64(2):507--535, 2022.

\bibitem{oettershagen2022temporal}
Lutz Oettershagen, Petra Mutzel, and Nils~M Kriege.
\newblock Temporal walk centrality: ranking nodes in evolving networks.
\newblock In {\em Proceedings of the ACM Web conference 2022}, pages 1640--1650, 2022.

\bibitem{opsahl2010node}
Tore Opsahl, Filip Agneessens, and John Skvoretz.
\newblock Node centrality in weighted networks: Generalizing degree and shortest paths.
\newblock {\em Social networks}, 32(3):245--251, 2010.

\bibitem{pan2011path}
Raj~Kumar Pan and Jari Saram{\"a}ki.
\newblock Path lengths, correlations, and centrality in temporal networks.
\newblock {\em Physical Review E—Statistical, Nonlinear, and Soft Matter Physics}, 84(1):016105, 2011.

\bibitem{perron2007perron}
Oskar Perron.
\newblock The perron-frobenius theorem.
\newblock {\em Cit. on}, page~27, 2007.

\bibitem{restrepo2006characterizing}
Juan~G Restrepo, Edward Ott, and Brian~R Hunt.
\newblock Characterizing the dynamical importance of network nodes and links.
\newblock {\em Physical review letters}, 97(9):094102, 2006.

\bibitem{rocha2014random}
Luis~EC Rocha and Naoki Masuda.
\newblock Random walk centrality for temporal networks.
\newblock {\em New Journal of Physics}, 16(6):063023, 2014.

\bibitem{saavedra2010mutually}
Serguei Saavedra, Scott Powers, Trent McCotter, Mason~A Porter, and Peter~J Mucha.
\newblock Mutually-antagonistic interactions in baseball networks.
\newblock {\em Physica A: Statistical Mechanics and its Applications}, 389(5):1131--1141, 2010.

\bibitem{sabidussi1966centrality}
Gert Sabidussi.
\newblock The centrality index of a graph.
\newblock {\em Psychometrika}, 31(4):581--603, 1966.

\bibitem{sariyuce2013streaming}
Ahmet~Erdem Sar{\'\i}y{\"u}ce, Bu{\u{g}}ra Gedik, Gabriela Jacques-Silva, Kun-Lung Wu, and {\"U}mit~V {\c{C}}ataly{\"u}rek.
\newblock Streaming algorithms for k-core decomposition.
\newblock {\em Proceedings of the VLDB Endowment}, 6(6):433--444, 2013.

\bibitem{stewart1990matrix}
Gilbert~W Stewart and Ji-guang Sun.
\newblock Matrix perturbation theory.
\newblock {\em (No Title)}, 1990.

\bibitem{sun2007comparative}
Lanfang Sun, Menghui Li, Lu~Jiang, and Lu~Tan.
\newblock Comparative analysis of the gene co-regulatory network of normal and cancerous lung.
\newblock {\em Physica A: Statistical Mechanics and its Applications}, 384(2):739--746, 2007.

\bibitem{takaguchi2016coverage}
Taro Takaguchi, Yosuke Yano, and Yuichi Yoshida.
\newblock Coverage centralities for temporal networks.
\newblock {\em The European Physical Journal B}, 89(2):35, 2016.

\bibitem{tang2010analysing}
John Tang, Mirco Musolesi, Cecilia Mascolo, Vito Latora, and Vincenzo Nicosia.
\newblock Analysing information flows and key mediators through temporal centrality metrics.
\newblock In {\em Proceedings of the 3rd workshop on social network systems}, pages 1--6, 2010.

\bibitem{taylor2017eigenvector}
Dane Taylor, Sean~A Myers, Aaron Clauset, Mason~A Porter, and Peter~J Mucha.
\newblock Eigenvector-based centrality measures for temporal networks.
\newblock {\em Multiscale Modeling \& Simulation}, 15(1):537--574, 2017.

\bibitem{taylor2021tunable}
Dane Taylor, Mason~A Porter, and Peter~J Mucha.
\newblock Tunable eigenvector-based centralities for multiplex and temporal networks.
\newblock {\em Multiscale Modeling \& Simulation}, 19(1):113--147, 2021.

\bibitem{uddin2011time}
Shahadat Uddin and Liaquat Hossain.
\newblock Time scale degree centrality: A time-variant approach to degree centrality measures.
\newblock In {\em 2011 International Conference on Advances in Social Networks Analysis and Mining}, pages 520--524. IEEE, 2011.

\bibitem{vukadinovic2013centrality}
Danica Vukadinovi{\'c}~Greetham, Zhivko Stoyanov, and Peter Grindrod.
\newblock Centrality and spectral radius in dynamic communication networks.
\newblock In {\em International Computing and Combinatorics Conference}, pages 791--800. Springer, 2013.

\bibitem{wang2017new}
Shasha Wang, Yuxian Du, and Yong Deng.
\newblock A new measure of identifying influential nodes: Efficiency centrality.
\newblock {\em Communications in Nonlinear Science and Numerical Simulation}, 47:151--163, 2017.

\bibitem{wang2011identifying}
Yang Wang, Zengru Di, and Ying Fan.
\newblock Identifying and characterizing nodes important to community structure using the spectrum of the graph.
\newblock {\em PloS one}, 6(11):e27418, 2011.

\bibitem{wasserman1994social}
Stanley Wasserman and Katherine Faust.
\newblock Social network analysis: Methods and applications.
\newblock 1994.

\bibitem{yen2013efficient}
Chia-Chen Yen, Mi-Yen Yeh, and Ming-Syan Chen.
\newblock An efficient approach to updating closeness centrality and average path length in dynamic networks.
\newblock In {\em 2013 IEEE 13th International Conference on Data Mining}, pages 867--876. IEEE, 2013.

\bibitem{yin2018inter}
Ran-Ran Yin, Qiang Guo, Jian-Nan Yang, and Jian-Guo Liu.
\newblock Inter-layer similarity-based eigenvector centrality measures for temporal networks.
\newblock {\em Physica A: Statistical Mechanics and its Applications}, 512:165--173, 2018.

\bibitem{ZAHOOR2026115648}
Aaqib Zahoor, Janibul Bashir, and Iqra~Altaf Gillani.
\newblock Forecast to influence: Scalable temporal influence maximization in streaming settings.
\newblock {\em Engineering Applications of Artificial Intelligence}, 181:115648, 2026.

\bibitem{zhao2023general}
Xiuming Zhao, Hongtao Yu, Shuxin Liu, and Xiaochun Cao.
\newblock A general higher-order supracentrality framework based on motifs of temporal networks and multiplex networks.
\newblock {\em Physica A: Statistical Mechanics and its Applications}, 614:128548, 2023.

\end{thebibliography}

\end{document}